\documentclass[sigconf]{acmart}

\AtBeginDocument{%
  }

\copyrightyear{2026}
\acmYear{2026}
\setcopyright{cc}
\setcctype{by}
\acmConference[SeQureDB '26]{Workshop on Secure and Private Data Management}{May 31-June 05, 2026}{Bengaluru, India}
\acmBooktitle{Workshop on Secure and Private Data Management (SeQureDB '26), May 31-June 05, 2026, Bengaluru, India}
\setcopyright{acmlicensed}
\acmDOI{10.1145/3807894.3810274}
\acmISBN{979-8-4007-2219-6/2026/05}

\usepackage[normalem]{ulem}

\usepackage{xcolor}
\usepackage{times}
\usepackage{color}
\usepackage{graphicx}
\usepackage{amsmath}
\usepackage{array}
\usepackage{subcaption}
\usepackage{textcomp}
\usepackage{adjustbox}
\usepackage{xspace}
\usepackage{float}
\usepackage{algpseudocode}
\usepackage{wrapfig}
\usepackage{microtype}
\usepackage{multirow}
\usepackage{multicol}
\usepackage{balance}
\usepackage{nowidow}
\usepackage{import}
\usepackage{mathtools}
\usepackage{listings}
\usepackage{url}
\usepackage{hhline}
\usepackage{tabularx}
\usepackage[mathscr]{euscript}
 \let\mathscr\relax
\usepackage{bigints}
\usepackage{adjustbox}
\usepackage{textcomp}
\usepackage{enumitem}
\usepackage{array}
\usepackage{makecell}
\usepackage{bm}
\usepackage{xparse}
\usepackage{booktabs}
\usepackage{colortbl}
\usepackage{algorithm}
\usepackage{algorithmicx}
\usepackage{scalerel}

\DeclareMathOperator*{\argmax}{arg\,max}

\definecolor{lightgray}{gray}{0.92}
\definecolor{okgray}{gray}{0.72}
\definecolor{applegreen}{rgb}{0.13, 0.67, 0.8}
\makeatletter 
\renewcommand{\@seccntformat}[1]{\csname the#1\endcsname\ \ }
\makeatother

\newcommand{\BigO}[1]{\ensuremath{\operatorname{O}\bigl(#1\bigr)}}

\newcommand{\rfig}[1]{Figure~\ref{#1}}

\newcommand{\rtab}[1]{Table~\ref{#1}}
\newcommand{\req}[1]{Eq.~\ref{#1}}
\newcommand{\rsec}[1]{Section~\ref{#1}}

\newcommand{\rlem}[1]{Lemma~\ref{#1}}
\newcommand{\ralg}[1]{Algorithm~\ref{#1}}
\newcommand{\rapp}[1]{Appendix~\ref{#1}}

\newcommand{\mynonparagraph}[1]{\vspace{0.1in} \noindent \textbf{\emph{#1}}\mbox{}}

\newsavebox{\fminipagebox}
\NewDocumentEnvironment{fminipage}{m O{\fboxsep}}
 {\par\kern#2\noindent\begin{lrbox}{\fminipagebox}
  \begin{minipage}{#1}\ignorespaces}
 {\end{minipage}\end{lrbox}%
  \makebox[#1]{%
    \kern\dimexpr-\fboxsep-\fboxrule\relax
    \fbox{\usebox{\fminipagebox}}%
    \kern\dimexpr-\fboxsep-\fboxrule\relax
  }\par\kern#2
 }

\algblock{ParFor}{EndParFor}
\algnewcommand\algorithmicparfor{\textbf{par-for}}
\algnewcommand\algorithmicpardo{\textbf{do}}
\algnewcommand\algorithmicendparfor{\textbf{end\ par-for}}
\algrenewtext{ParFor}[1]{\algorithmicparfor\ #1\ \algorithmicpardo}
\algrenewtext{EndParFor}{\algorithmicendparfor}

\algblock{MyFunc}{EndMyFunc}
\algrenewtext{MyFunc}[3]{#1\ \textbf{#2}\ #3}
\algrenewtext{EndMyFunc}{}
\algrenewcommand\algorithmicindent{1.0em}%
\newcommand{\MyComment}[1]{}
\algdef{SE}[DOWHILE]{Do}{DoWhile}{\algorithmicdo}[1]{\algorithmicwhile\ #1}

\algblock{MyWhile}{EndMyWhile}
\algnewcommand\algorithmicmywhile{\textbf{while}}
\algnewcommand\algorithmicmywhiledo{\textbf{do}}
\algnewcommand\algorithmicendmywhile{\textbf{end while}}
\algrenewtext{MyWhile}[1]{\algorithmicmywhile\ #1\ \algorithmicpardo}
\algrenewtext{EndMyWhile}{\algorithmicendmywhile}

\theoremstyle{plain}

\theoremstyle{definition}

\algnewcommand\algorithmicforeach{\textbf{for each}}
\algdef{S}[FOR]{ForEach}[1]{\algorithmicforeach\ #1\ \algorithmicdo}

\newcommand{\etal}{\hbox{\emph{et al.}}\xspace}
\newcommand{\eg}{\hbox{\emph{e.g.}}\xspace}
\newcommand{\ie}{\hbox{\emph{i.e.}}\xspace}

\newcommand{\eedp}{\ensuremath{\varepsilon\text{-edge-DP}}}

\usepackage{amsmath}
\usepackage{amsthm}

\newtheorem{lemma}{Lemma}[section]
\theoremstyle{definition}

\theoremstyle{remark}

\begin{document}

\title{Estimating Power-Law Exponent \\ with Edge Differential Privacy}
\titlenote{This version adds an appendix to the published paper.}

\author{Adam Tan}
\email{wat@sfu.ca}
\affiliation{%
  \institution{Simon Fraser University}
  \city{Burnaby}
  \state{BC}
  \country{Canada}
}

\author{Mohamed Hefny}
\email{mohamed_hefny@sfu.ca}
\affiliation{%
  \institution{Simon Fraser University}
  \city{Burnaby}
  \state{BC}
  \country{Canada}
}

\author{Keval Vora}
\email{keval@cs.sfu.ca}
\affiliation{%
  \institution{Simon Fraser University}
  \city{Burnaby}
  \state{BC}
  \country{Canada}
}

\begin{abstract}
Many real-world graphs have degree distributions that are well
approximated by a power-law, and the corresponding scaling
parameter $\alpha$ provides a compact summary of that structure which is useful for graph analysis and system optimization.
When graphs contain sensitive relationship data,
$\alpha$ must be estimated without revealing information about
individual edges. This paper studies power-law exponent estimation
under edge differential privacy. Instead of first releasing a noisy
degree distribution and then fitting a power-law model, we propose
privatizing only the low-dimensional sufficient statistics needed to
estimate $\alpha$, thereby avoiding the high distortion introduced by
traditional approaches. 
Using these released statistics, we support both discrete
approximation and likelihood-based numerical optimization
for efficient parameter estimation. 
We develop edge-DP algorithms for both centralized and local DP models, 
compare degree release and log-statistic release in the local
setting, and evaluate the resulting methods on various graph datasets
across multiple privacy budgets and tail-cutoff settings.
\end{abstract}

\maketitle

\section{Introduction}

Graph databases are useful in domains where relationships are central to the data, enabling efficient
structure-based information retrieval. 
However, graphs often contain sensitive information, and there is a need to develop privacy-preserving graph analysis techniques that prevent sensitive information from being leaked.

Among the structural properties studied in graphs, degree distributions are especially important. Many real-world graphs exhibit scale-free behavior, in which a small number of nodes act as highly connected hubs while most nodes have relatively few connections. This pattern appears across domains such as the web, social networks, and online retail, among many others~\cite{Newman2005Power}.

Such behavior is often modeled by a \emph{power-law distribution}~\cite{mitzenmacher2004brief,Newman2005Power,clauset2009powerlaws}, typically expressed as $P(d) \propto d^{-\alpha}$, meaning that the probability $P(d)$ that a node has degree $d$ decreases polynomially, with $\alpha$ being a constant parameter of the distribution known as the \emph{scaling parameter}, also referred to as the power-law exponent. This means high-degree nodes are rare while low-degree nodes are much more common.

Estimating the scaling parameter $\alpha$ of a power-law distribution helps tailor graph algorithms and systems that rely on degree information~\cite{li2017experimental,fastexacthub,jiang2014hop,chung2002connected,tang2015optimizing,vora2019lumos}. Practitioners typically estimate $\alpha$ by maximum likelihood, either through a closed-form discrete approximation or through numerical optimization~\cite{clauset2009powerlaws}. For graphs containing sensitive relationship data, however, we must perform this estimation under privacy constraints so that the released parameter does not reveal sensitive information about individual edges or graph structure.

To estimate $\alpha$ while protecting sensitive graph information, we use differential privacy (DP)~\cite{dpbook}, which provides the formal framework for this goal by enabling data analysis while protecting sensitive graph information. Existing DP methods for graph analysis~\cite{ldpstats, shuntriangles, Dhulipala22, sectric, privgraph,privagm} provide privacy guarantees for structural information in graphs. However, they do not study private estimation of the power-law scaling parameter $\alpha$ directly.

To address this gap, we develop algorithms for estimating the power-law scaling parameter $\alpha$ under \emph{edge differential privacy} \mbox{(edge-DP)}, which protects individual edges.

A common baseline for private $\alpha$ estimation, used for example by Hay et al.~\cite{degreedistribution}, is to first release a DP degree-distribution histogram, that is, counts of how many nodes have degree $0,1,2,\ldots$, and then fit a power-law model to the privatized histogram via MLE. However, when the goal is to estimate a single scalar parameter, this pipeline of releasing a histogram and then fitting a model is inefficient.
The noise added to each degree count, together with smoothing, binning, and projection steps, can distort the tail of the degree distribution before fitting, which leads to inaccurate $\alpha$ estimates with high variance.

Instead, we privatize only the low-dimensional statistics needed for the $\alpha$ estimation. We start from the discrete approximation estimator for the power-law scaling parameter~\cite{clauset2009powerlaws}, decompose it into low-sensitivity sub-components, and apply the Laplace mechanism to each component before recombining them into a private estimate. We then show how the same privatized statistics can also support maximum likelihood estimation via numerical optimization, allowing us to obtain both discrete-approximation and numerical-optimization variants under edge-DP. Because the
released quantities have low sensitivity, the Laplace mechanism adds
relatively small noise.

We develop differentially private algorithms under both the centralized and local
models. In the centralized model, a trusted curator has access to the
entire graph and releases noisy estimates of the required low-dimensional statistics.
In the local model, there is no trusted curator, and each node
perturbs its own edge-related statistics before release. We study two
local release strategies: degree release and log-statistic release.

Together, these choices give two centralized edge-DP algorithms, one based on discrete approximation and another based on numerical optimization,
and four local edge-DP variants obtained by combining degree
release or log-statistic release with discrete approximation or
numerical optimization.

We evaluate the accuracy of these methods on 6 publicly
available graph datasets and 3 synthetic datasets. Our results show
that directly privatizing the sufficient statistics needed to estimate
$\alpha$ is more accurate and more stable than histogram-based
fitting in the centralized model. 
Among the methods that privatize the
sufficient statistics directly, numerical optimization is overall
more accurate than discrete approximation.

\section{Background and Setup}
This section introduces the graph model, the power-law estimation
setup, and the privacy definitions used throughout the paper.
Table~\ref{tab:notations} summarizes the main notations.

\begin{table}[t]
    \centering
    \footnotesize
    \caption{Notations.}
    \label{tab:notations}
    \vspace{-0.1in}
    \begin{tabular}{c|l} 
         Symbols &  \multicolumn{1}{c}{Description} \\\midrule
         $G = (V, E)$& Graph with nodes $V$ and edges $E$  \\
         $d_{v}$&  Degree of node $v \in V$ \\
         $d_{\min}, d_{\max}$&  Degree range of MLE fit $(d_{\min}, d_{\max})$ \\
         $D'$& Degrees in $G$ between $d_{\min}$ and $d_{\max}$ \\
         $T_{disc}, N$& Statistics for $\alpha$ estimation \\
         $\tilde{d}_{v}, \tilde{T}_{disc}, \tilde{N}$ & DP estimates of $d_{v}, T_{disc}$ and $N$ \\
         $\hat{\alpha}_{\textsc{central}}, \hat{\alpha}_{\textsc{local}}$& Centralized DP and local DP $\alpha$ estimates \\ 
    \end{tabular}
    \vspace{0.1in}
\end{table}

\subsection{Power-Law Degree Distribution}
\vspace{-0.05in}
\mynonparagraph{Graph.} We consider a simple undirected graph $G = (V,E)$, where $V$ denotes the set of nodes and $E$ the set of edges. The degree of a node $v \in V$ is denoted by $d_v$. 

\vspace{-0.03in}
\mynonparagraph{Power-Law Degree Distribution.} The degree distribution of a graph $G$ is the probability distribution $P(d)$ that a randomly selected node $v \in V$ has degree $d_v = d$. Many real-world graphs are scale-free, with a small number of highly connected nodes and many low-degree nodes. A power-law degree distribution models scale-free graphs by $P(d) \propto d^{-\alpha}$ for $d \geq d_{\min}$, indicating the coexistence of a few highly connected nodes and many sparsely connected ones. Nodes with degrees $d_v \geq d_{\min}$ are called tail nodes. 

The scaling parameter $\alpha$ is a single parameter that governs the heaviness of the distribution's right tail. It is typically below 3, with occasional exceptions~\cite{clauset2009powerlaws}. Smaller values of $\alpha$ correspond to heavier tails and a higher likelihood of extreme high-degree nodes, whereas larger values imply a more rapid decay and consequently a more homogeneous connectivity structure within the network. 

\vspace{-0.03in}
\mynonparagraph{Discrete Power-Law Distribution.}
Since node degrees are integers, the degree distribution of tail nodes follows a discrete power-law distribution with parameter $\alpha$.
The node degrees in the tail are independent and identically distributed according to a truncated discrete power-law distribution with probability mass function $P(d \mid \alpha)$:
\begin{equation}
  P(d \mid \alpha) = \frac{d^{-\alpha}}{Z(\alpha)} 
  \qquad
  Z(\alpha) = \sum_{d = d_{\min}}^{d_{\max}} d^{-\alpha} \label{eq:discretepowerlaw}
\end{equation}
for $d \in \{d_{\min}, \dots, d_{\max}\}$ where $d_{\max}$ represents a known upper bound on the node degrees and $Z(\alpha)$ is the normalizing constant based on Hurwitz-$\zeta$-function~\cite{bauke2007powerlaws}.
The tail degrees with power-law distribution are captured in the multiset $D' = [d_v : v \in V \ \land \ d_{\min} \leq d_v \leq d_{\max}]$, and $N = |D'|$ is the number of degrees in the multiset $D'$. 

\mynonparagraph{Maximum Likelihood Estimation for $\alpha$.}
A common way to estimate $\alpha$ is through Maximum Likelihood Estimation (MLE). Clauset~\etal~\cite{clauset2009powerlaws} showed that a closed-form discrete approximation estimator for observed degrees $d_i \in D'$ can be defined as follows:
\begin{equation}
  \hat{\alpha}_{\mathrm{disc}}
  = 1 + \frac{N}{T_{\text{disc}}}
  \qquad
  T_{\text{disc}} = \sum_{i=1}^N \ln\!\left(\frac{d_i}{d_{\min} - 0.5}\right) 
  \label{eq:discapproximation}
\end{equation}
This discrete approximation provides a closed-form estimate which may generally be good enough for most practical purposes. Alternatively,  $\hat{\alpha}_{\mathrm{disc}}$ can be obtained by numerical optimization as described next. 

The likelihood function for the discrete power-law model in \req{eq:discretepowerlaw} with $d_i \in D'$ is~\cite{bauke2007powerlaws,clauset2009powerlaws}:
\begin{equation*}
    \ell(\alpha; D')
  = \sum_{i=1}^N \log P(d_i \mid \alpha)
  = -\alpha \sum_{i=1}^N \ln d_i - N \ln Z(\alpha)
\end{equation*}
With the aggregated statistic $T = \sum_{i=1}^N \ln d_i$, we can express the log-likelihood in terms of $T$ as follows:
\begin{equation}
    \ell(\alpha; T,N)=-\alpha T-N \ln Z(\alpha)
\end{equation}
where the sufficient statistics for $\alpha$ are the pair $(T, N)$, since $Z(\alpha)$ does not depend on observed $d_i$. Therefore, the maximum likelihood estimator is: 
\begin{equation}
  \hat{\alpha}_{\mathrm{disc}}
  = \argmax_{\alpha > 0} \ell(\alpha; T, N)
  = \argmax_{\alpha > 0} \bigl[-\alpha T - N \ln Z(\alpha)\bigr]
  \label{eq:discloglikelihood}
\end{equation}

\vspace{0.02in}
\subsection{Privacy Model}
In graph data where sensitive information lies in connections between entities, edge differential privacy (edge-DP)~\cite{edgedporiginal} ensures analyses do not reveal individual edges. Similar to the original DP formulation by Dwork \textit{et al.} \cite{dpbook}, with edge-DP each edge is treated as an individual entry in a database (or graph) $G$.

\vspace{-0.01in}
\mynonparagraph{Edge Differentially Private $\alpha$ Estimation.}
A randomized $\alpha$ estimation algorithm $\mathcal{A}(G)$ that takes input graph $G$ 
and outputs some $\hat{\alpha}_{\mathrm{disc}}$ value from output space $\mathcal{R}$ is \emph{$\varepsilon$ edge differentially private} (\eedp) if for all $R \subseteq \mathcal{R}$, and neighboring graphs $G$ and $G'$,
\begin{gather*}
    Pr(\mathcal{A}(G) \in R)\leq e^{\varepsilon}Pr(\mathcal{A}(G') \in R)
\end{gather*}

Here, neighboring graphs $G$ and $G'$ share the same set of nodes but differ in one edge (\ie, the size of the symmetric difference of their edge sets is 1). 
The $\varepsilon$ is referred to as the \emph{privacy budget} as it governs the amount of random noise added to the $\hat{\alpha}_{\mathrm{disc}}$ value.

\vspace{-0.01in}
\mynonparagraph{Centralized and Local Edge-DP Models.}
Edge differential privacy can be realized under two models based on data visibility: the central model and the local model.

In the centralized model, a trusted curator maintains the entire graph and applies a randomized algorithm to ensure that the presence or absence of any single edge cannot be inferred. This is suitable for traditional database scenarios like a curator-managed social network, where the entire graph is safely accessible. 

However, a trusted curator of data store having access to the entire graph can become impractical in modern systems that rely on decentralized or federated architectures. In local edge differential privacy (LEDP)~\cite{ldpstats,ldpsynth,  shuntriangles, Dhulipala22, centrallocaledp}, each node retains ground truth to their associated data, and aggregations on the graphs are constructed using $\eedp$ queries to the nodes. Hence, the $\hat{\alpha}_{\mathrm{disc}}$ computed under LEDP is based on the degree estimates that must be computed from individually perturbed edges. 

\subsection{Our Goal}
Our goal is to design \eedp{} algorithms that estimate the power-law
scaling parameter of a graph $G$ over its fitted tail $D'$ while
preserving high utility under privacy constraints. In
the centralized model, the algorithm accesses the entire graph and
produces a private estimate $\hat{\alpha}_{\textsc{central}}$. In the
LEDP model, each node ensures \eedp{} locally on its own
edge-related information, and the reports are aggregated to produce private estimate $\hat{\alpha}_{\textsc{local}}$.

\section{Centralized Algorithms}
We develop two DP algorithms in centralized model to estimate scaling parameter $\hat{\alpha}_{\textsc{central}}$: one using discrete approximation and other via numerical optimization. 
Both algorithms first compute noisy statistics for $T_{\text{disc}}$ and $N$ as defined in \req{eq:discapproximation} using Laplace mechanism.
And then, we use these to compute $\hat{\alpha}_{\textsc{central}}$ using two approaches (\rsec{sec:centralda} and \rsec{sec:centralno}). 

To compute the noisy statistics using Laplace mechanism, we first analyze their sensitivities as described next.

\subsection{Sensitivity Analysis}
\begin{lemma}
\label{lma:tdsicsensitivity}
    Global sensitivity of $T_{\text{disc}}$ is bounded by \BigO{
\ln (
    \frac{
      d_{\min} + 1
    }{
      d_{\min}
    }
  )
 }.
\end{lemma}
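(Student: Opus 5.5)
The plan is a direct case analysis on how a single edge change moves the terms of $T_{\text{disc}}$ in \req{eq:discapproximation}. Fix neighboring graphs $G$ and $G'$ where, without loss of generality, $G'$ has one extra edge $(u,w)$; the removal case follows by symmetry. Exactly two degrees change, $d_u \mapsto d_u+1$ and $d_w \mapsto d_w+1$, and every other node contributes the same term (or no term) to $T_{\text{disc}}$ in both graphs. So $|T_{\text{disc}}(G)-T_{\text{disc}}(G')|$ is at most the sum of the changes in the contributions of $u$ and $w$. It therefore suffices to bound the change in the contribution of one endpoint by $c\ln\frac{d_{\min}+1}{d_{\min}}$ for an absolute constant $c$.

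For a single endpoint with degree $d$ in $G$, I would split into cases according to where $d$ sits relative to $[d_{\min},d_{\max}]$.
\begin{itemize}
\item \emph{Case 1: $d<d_{\min}-1$.} The node is outside the tail in both graphs, so its change is $0$.
\item \emph{Case 2: $d_{\min}\le d$ and $d+1\le d_{\max}$.} The node stays in the tail. Its term changes by
\[
\ln\frac{d+1}{d_{\min}-0.5}-\ln\frac{d}{d_{\min}-0.5}=\ln\frac{d+1}{d},
\]
which is decreasing in $d$ and hence at most $\ln\frac{d_{\min}+1}{d_{\min}}$.
\item \emph{Case 3: $d=d_{\min}-1$.} The node enters the tail, adding the term $\ln\frac{d_{\min}}{d_{\min}-0.5}=\ln\bigl(1+\frac{0.5}{d_{\min}-0.5}\bigr)$. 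Since $\frac{0.5}{d_{\min}-0.5}\le\frac{1}{d_{\min}}$ iff $0.5\,d_{\min}\le d_{\min}-0.5$ iff $d_{\min}\ge 1$, this term is also at most $\ln\frac{d_{\min}+1}{d_{\min}}$.
\end{itemize}

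The main obstacle is the upper end, a node with $d=d_{\max}$ whose degree becomes $d_{\max}+1$. It would drop out of $D'$, removing the term $\ln\frac{d_{\max}}{d_{\min}-0.5}$. That term is not small, so it would destroy the bound; this also covers the sub-case where adding a tail member changes $N$. I would handle this by invoking the modeling assumption stated with \req{eq:discretepowerlaw}, that $d_{\max}$ is a known upper bound on the node degrees. Then a degree of $d_{\max}+1$ never occurs among the admissible neighboring graphs. Alternatively, degrees can be clamped at $d_{\max}$ before computing $T_{\text{disc}}$, so a node at $d_{\max}$ contributes an unchanged term.

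I would state explicitly that the lemma holds under this assumption, or with such clamping. With the upper end handled, summing over the two endpoints gives
\[
|T_{\text{disc}}(G)-T_{\text{disc}}(G')|\le 2\ln\frac{d_{\min}+1}{d_{\min}}=\BigO{\ln\left(\frac{d_{\min}+1}{d_{\min}}\right)}.
\]
Maximizing over all neighboring pairs yields the claimed global sensitivity.
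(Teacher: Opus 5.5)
Your proposal is correct and follows the paper's proof essentially step for step. It uses the same per-endpoint case split: staying in the tail changes the term by $\ln\frac{d+1}{d}\le\ln\frac{d_{\min}+1}{d_{\min}}$, and crossing the $d_{\min}$ threshold changes it by $\ln\frac{d_{\min}}{d_{\min}-0.5}\le\ln\frac{d_{\min}+1}{d_{\min}}$; summing over the two endpoints gives $2\ln\frac{d_{\min}+1}{d_{\min}}$. Your explicit handling of the $d_{\max}$ boundary is more careful than the paper, which silently defines the per-node contribution $t_v$ for every $d_v\ge d_{\min}$ with no upper cutoff and so implicitly relies on the same assumption that no degree exceeds $d_{\max}$ (this holds automatically for simple graphs when $d_{\max}\ge |V|-1$, e.g.\ the paper's suggested $d_{\max}=|V|$).
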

\begin{proof}
Consider neighboring graphs $G$ and $G'$ that differ in exactly one edge $(u, w)$. Only degrees of $u$ and $w$ change across these two graphs. Let $d_u'$ and $d_w'$ be respectively the degrees of nodes $u$ and $w$ in $G'$. Without loss of generality, $d_u - d_u' = 1$ and $d_w - d_w' = 1$.
Define the per-node contribution $t_v(G)$ as:
\[
  t_v(G)
  =
  \begin{cases}
    \ln\!\left(
    \frac{
      d_v(G)
    }{
      d_{\min} - 0.5
    }
  \right)
    & \quad d_v(G) \ge d_{\min}\\
    0 & \quad d_v(G) < d_{\min}
  \end{cases}
\]
The change in a node's contribution will depend on whether its degree remains in the tail. If it does (\ie, $d_v \geq d_{\min}$ and $d_v' \geq d_{\min}$), then the change in the contribution is:
\[
\begin{aligned}
  t_v(G) - t_v(G')
  & =
  \ln\!\left(
    \frac{
      d_v
    }{
      d_{\min} - 0.5
    }
  \right) 
   -
  \ln\!\left(
    \frac{
      d_v'
    }{
      d_{\min} - 0.5
    }
  \right) \\
  & =
  \ln\!\left(
    \frac{
      d_v
    }{
      d_v'
    }
  \right) 
   =
  \ln\!\left(
    \frac{
      d + 1
    }{ d
    }
  \right)
\end{aligned}
\]
where $d$ is an arbitrary integer such that $d \geq d_{\min}$. Because this expression is strictly decreasing in $d,$ the largest possible difference will occur when $d = d_{\min}$:
\[
\begin{aligned}
  \left\lvert t_v(G) - t_v(G')\right\rvert
  & \leq
  \ln\!\left(
    \frac{
      d_{\min} + 1
    }{
      d_{\min}
    }
  \right)
\end{aligned}
\]
When $d_{\min} = 1$, this difference becomes $\ln 2$. The difference decreases as $d_{\min}$ grows, because $\frac{d_{\min}+1}{d_{\min}}$ decreases as $d_{\min}$ increases.

When a node's degree crosses from being below $d_{\min}$ to above $d_{\min}$ (or vice versa), then its contribution changes from $0$ to
$
  \ln\!\left(
    \frac{
      d_{\min}
    }{
      d_{\min} - 0.5
    }
  \right).
$
The absolute value of this quantity is also no larger than 
$
  \ln\!\left(
    \frac{
      d_{\min} + 1
    }{
      d_{\min}
    }
  \right)
$
when $d_{\min} \geq 1$. 

Therefore, the sensitivity of $t_v(G)$ is no larger than 
$
  \ln\!\left(
    \frac{
      d_{\min} + 1
    }{
      d_{\min}
    }
  \right)
$
for any node, and since a single edge affects at most two nodes, the total sensitivity of
$
  T_{\text{disc}} = \sum_v t_v(G)
$
is bounded by:
\[
\begin{aligned}
  \Delta T_{\text{disc}}
  & =
  \max_{\text{neighbors } G, G'}
    \left\lvert T_{\text{disc}}(G) - T_{\text{disc}}(G') \right\rvert
   \leq
  2 \ln\!\left(
    \frac{
      d_{\min} + 1
    }{
      d_{\min}
    }
  \right)
\end{aligned}
\]
\end{proof}

The above $d_{\min}$ dependent bound on the global sensitivity of
$T_{\text{disc}}$ is important because it remains small for all
relevant choices of $d_{\min}$. Even at the smallest value,
$d_{\min}=1$, the sensitivity is only $2\ln 2 \approx 1.386$. As
$d_{\min}$ increases, this bound decreases, so for a fixed privacy
budget the released value $\tilde{T}_{\text{disc}}$ stays closer to the
true statistic. The overall effect of $d_{\min}$ on final estimation
accuracy, however, depends on additional factors and is evaluated in
Section~\ref{sec:evaluation}.

\begin{lemma}
Global sensitivity of $N$ is at most 2.
\end{lemma}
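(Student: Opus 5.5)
I will follow the same template as the proof of \rlem{lma:tdsicsensitivity}: fix neighboring graphs $G$ and $G'$ that differ in exactly one edge $(u,w)$, and note that only $d_u$ and $d_w$ change, each by exactly one. Write $N$ as a sum of per-node indicators,
\[
  N(G) = \sum_{v \in V} n_v(G), \qquad
  n_v(G) = \begin{cases} 1 & d_{\min} \le d_v(G) \le d_{\max} \\ 0 & \text{otherwise,} \end{cases}
\]
so that $N(G) - N(G') = \sum_{v \in \{u,w\}} \bigl(n_v(G) - n_v(G')\bigr)$. Every node other than $u$ and $w$ has the same degree in both graphs, so its indicator term cancels.

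Next I bound the change in each remaining indicator. Since $n_v$ takes only the values $0$ and $1$, we have $|n_v(G) - n_v(G')| \le 1$ for any degrees. It is still informative to list when the change is nonzero:
\begin{itemize}
\item it is $0$ if both $d_v$ and $d_v'$ lie in $[d_{\min}, d_{\max}]$, or both lie outside it;
\item it is $\pm 1$ exactly when the degree crosses $d_{\min}$, i.e., $d_v = d_{\min}$ and $d_v' = d_{\min}-1$;
\item it is also $\pm 1$ when the degree crosses $d_{\max}$, i.e., $d_v = d_{\max}+1$ and $d_v' = d_{\max}$.
\end{itemize}
The last case can only occur if $d_{\max}$ is not a true upper bound on degrees; either way the bound of $1$ holds.

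By the triangle inequality, $|N(G) - N(G')| \le 2$. Taking the maximum over all neighboring pairs gives $\Delta N \le 2$.

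There is no real obstacle here. The only point needing care is that the edge set changes in a single edge, so the two endpoints are distinct (the graph is simple) and each endpoint's degree changes by exactly one. To show the bound is attained, take $d_u = d_w = d_{\min}$ in $G$. Then both nodes leave the tail when the edge is removed, and $N$ drops by $2$.
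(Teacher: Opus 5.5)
Your proof is correct and follows the same argument as the paper: only the two endpoints of the differing edge can change tail membership, each changing $N$ by at most one, so $|N(G)-N(G')|\le 2$. Your per-node indicator decomposition, your explicit treatment of the $d_{\max}$ boundary (which the paper's proof does not address, although $D'$ is defined with $d_v \le d_{\max}$), and your tightness example are useful refinements, but the approach is the same.
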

\begin{proof}
Consider neighboring graphs $G$ and $G'$ that differ in exactly one edge. The only nodes that can have their tail membership changed are the endpoints of the edge. Each endpoint can either enter the tail (if previously out of the tail) or exit the tail (if previously in the tail). Since only two nodes are affected, the total change in the number of nodes in the tail is:
\[
  \Delta N
  =
  \max_{\text{neighbors } G, G'}
    \left\lvert N(G) - N(G') \right\rvert 
  \ \leq \ 2
\]
\end{proof}

\begin{algorithm}[t]
\small
    \caption{DP $\hat{\alpha}_{\textsc{central}}$ via Discrete Approximation}
    \label{alg:DEDP}
    \begin{algorithmic}[1]
        \Require{Graph $G$, Minimum Degree $d_{\min}$, Maximum Degree $d_{\max}$, \newline Privacy Budget $\varepsilon = \varepsilon_t + \varepsilon_n$}
        \Ensure{DP Estimate of $\alpha$}
        \State $D' = [d_v: \forall v \in G.V \land d_{\min} \leq d_v \leq d_{\max}]$
        \State $T_{disc}\gets \sum\limits_{d \in D'} \ln\!\left(\frac{d}{d_{\min}-0.5}\right)$
        \State $\tilde{T}_{disc} \gets T_{disc} + \textsc{Lap}(\frac{2 \times ln(\frac{d_{\min}+1}{d_{\min}})}{\varepsilon_t})$
        \State $\tilde{N} \gets |D'| + \textsc{Lap}(\frac{2}{\varepsilon_n})$
        \State \Return $1 + \frac{\tilde{N}}{\tilde{T}_{disc}}$
    \end{algorithmic}
\end{algorithm}

\subsection{$\hat{\alpha}_{\textsc{central}}$ via Discrete Approximation} 
\label{sec:centralda}
Using global sensitivities $\Delta T_{\text{disc}}$ and $\Delta N$, the DP estimate $\hat{\alpha}_{\textsc{central}}$ is computed using the Laplace mechanism~\cite{dpbook}. 

\ralg{alg:DEDP} computes $\hat{\alpha}_{\textsc{central}}$ using the discrete approximation estimator from \req{eq:discapproximation}.
Lines 1-2 compute $D'$ and $T_{\text{disc}}$ using node degrees. Lines 3-4 add Laplace noise proportional to the global sensitivities to compute noisy $\tilde{T}_{\text{disc}}$ and $\tilde{N}$. The $\varepsilon$ budget is split into $\varepsilon_t$ and $\varepsilon_n$ while adding Laplace noise for $\tilde{T}_{\text{disc}}$ and $\tilde{N}$ respectively. Finally, line 5 inserts the noisy estimates into \req{eq:discapproximation} to obtain the DP estimate $\hat{\alpha}_{\textsc{central}}$; by post-processing, this step has no privacy loss.

\begin{lemma}
Using the Laplace Mechanism and Sequential Composition~\cite{dpbook}, $\hat{\alpha}_{\textsc{central}}$ computed by \ralg{alg:DEDP} is $(\varepsilon_t + \varepsilon_n)$-edge differentially private. 
\end{lemma}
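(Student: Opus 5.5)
The plan is to treat \ralg{alg:DEDP} as two Laplace mechanisms followed by post-processing, and then combine them by sequential composition.

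First, I would check that each release individually satisfies edge-DP. By \rlem{lma:tdsicsensitivity} (in the explicit form established in its proof), the global sensitivity of $T_{\text{disc}}$ over neighboring graphs $G,G'$ differing in one edge is at most $\Delta T_{\text{disc}} \le 2\ln\!\left(\frac{d_{\min}+1}{d_{\min}}\right)$. Line 3 adds $\textsc{Lap}(\Delta T_{\text{disc}}/\varepsilon_t)$, so the standard Laplace mechanism guarantee~\cite{dpbook} makes the release $\tilde{T}_{\text{disc}}$ $\varepsilon_t$-edge-DP. The argument is the usual density-ratio bound: for every output $x$,
\[
\frac{p_G(x)}{p_{G'}(x)} \le \exp\!\left(\frac{\varepsilon_t\,\lvert T_{\text{disc}}(G)-T_{\text{disc}}(G')\rvert}{\Delta T_{\text{disc}}}\right) \le e^{\varepsilon_t},
\]
and this is integrated over any measurable $R$. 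In the same way, the preceding lemma gives $\Delta N \le 2$, so line 4 with $\textsc{Lap}(2/\varepsilon_n)$ makes $\tilde{N}$ $\varepsilon_n$-edge-DP.

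Next, I would invoke sequential composition. The two noise draws are independent, so the joint density of $(\tilde{T}_{\text{disc}},\tilde{N})$ factorizes. The likelihood ratio between $G$ and $G'$ is then bounded by $e^{\varepsilon_t}e^{\varepsilon_n}$, which means the pair is $(\varepsilon_t+\varepsilon_n)$-edge-DP.

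Finally, line 5 computes $1+\tilde{N}/\tilde{T}_{\text{disc}}$, a deterministic function of the released pair that does not access $G$ again. By the post-processing property, $\hat{\alpha}_{\textsc{central}}$ inherits the $(\varepsilon_t+\varepsilon_n)$ guarantee. The event $\tilde{T}_{\text{disc}}=0$ has probability zero, and in any case it can be mapped to an arbitrary fixed output without affecting the argument.

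The main subtlety is making sure the sensitivity bounds are valid for the quantities actually computed in the algorithm. Lines 1--2 restrict $D'$ to $d_{\min}\le d_v\le d_{\max}$, whereas the proof of \rlem{lma:tdsicsensitivity} only considers crossing the lower threshold $d_{\min}$. I would therefore add the missing case in which a degree crosses $d_{\max}$. In that case the contribution $t_v$ jumps between $0$ and $\ln\!\left(\frac{d_{\max}}{d_{\min}-0.5}\right)$, and this jump need not be bounded by $\ln\!\left(\frac{d_{\min}+1}{d_{\min}}\right)$.

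To close this gap, I would either assume, as the background section does, that $d_{\max}$ is a known upper bound on all degrees so that the upper crossing never occurs, or clip degrees at $d_{\max}$ instead of dropping them. Under either convention, each endpoint changes its contribution by at most the per-node bound from \rlem{lma:tdsicsensitivity}, and the composition argument above goes through unchanged.
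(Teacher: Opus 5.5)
Your proposal is correct and follows the same route as the paper, which gives no proof beyond citing the Laplace mechanism and sequential composition: Laplace noise calibrated to $\Delta T_{\text{disc}}\le 2\ln\!\left(\frac{d_{\min}+1}{d_{\min}}\right)$ and $\Delta N\le 2$, composition of the two independent releases, and post-processing for line 5. Your remark about degrees crossing $d_{\max}$ is a real point that the proof of \rlem{lma:tdsicsensitivity} silently ignores, and either of your conventions repairs it: a domain in which no degree can exceed $d_{\max}$ (e.g.\ $d_{\max}\ge |V|-1$), or clipping degrees at $d_{\max}$.
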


\subsection{$\hat{\alpha}_{\textsc{central}}$ via Numerical Optimization}
\label{sec:centralno}

Instead of the above closed-form estimation, we can estimate $\hat{\alpha}_{\textsc{central}}$ by numerically optimizing the discrete log-likelihood. The key idea is to reuse the same noisy $\tilde{T}_{\text{disc}}$ and $\tilde{N}$ estimates, as described next. 

From the definition of $T_{\text{disc}}$ in \req{eq:discapproximation}:
\[
\begin{aligned}
T_{\text{disc}}&=\sum_{i=1}^{N}\ln\left(\frac{d_i}{d_{\min}-0.5}\right) 
 =\sum_{i=1}^{N}\ln d_i-N\ln(d_{\min}-0.5)
\end{aligned}
\]
Therefore, 
\vspace{-0.1in}
\[
\begin{aligned}
\sum_{i=1}^{N}\ln d_i&=T_{\text{disc}}+N\ln(d_{\min}-0.5)
\end{aligned}
\]

Hence, we can compute the DP estimate of this sum using the noisy statistics $\tilde{T}_{\text{disc}}$ and $\tilde{N}$:
\begin{equation}
\label{eq:numanalysislnd}
\begin{aligned}
\widetilde{\sum\ln d_i}&=\tilde{T}_{\text{disc}}+\tilde{N}\ln(d_{\min}-0.5)
\end{aligned}
\end{equation}

Hence, based on \req{eq:discloglikelihood}, our central DP MLE is:

\[
\begin{aligned}
\hat{\alpha}_{\textsc{central}}&=\argmax_{\alpha > 0} 
 \Bigl[-\alpha\widetilde{\sum \ln d_i}-\tilde{N}\ln Z(\alpha)\Bigr]
\end{aligned}
\]

\setlength{\textfloatsep}{0pt}
\begin{algorithm}[t]
\small
    \caption{DP $\hat{\alpha}_{\textsc{central}}$ via Numerical Optimization}
    \label{alg:mledp}
    \begin{algorithmic}[1]
        \Require{Graph $G$, Minimum Degree $d_{\min}$, Maximum Degree $d_{\max}$, \newline Privacy Budget $\varepsilon = \varepsilon_t + \varepsilon_n$}
        \Ensure{DP Estimate of $\alpha$}
        \State $D' = [d_v: \forall v \in G.V \land d_{\min} \leq d_v \leq d_{\max}]$
        \State $T_{disc} \gets \sum\limits_{d \in D'} \ln\!\left(\frac{d}{d_{\min}-0.5}\right)$
        \State $\tilde{T}_{disc} \gets T_{disc}  + \textsc{Lap}(\frac{2 \times ln(\frac{d_{\min}+1}{d_{\min}})}{\varepsilon_t})$
        \State $\tilde{N} \gets |D'| + \textsc{Lap}(\frac{2}{\varepsilon_n})$
        \State \Return$\underset{\alpha > 0}{\mathrm{argmax}}\, \textsc{LogLikelihood}(\tilde{T}_{disc}, \tilde{N}, d_{\min}, d_{\max}, \alpha)$ \Comment{Algo. \ref{alg:loglikelihood}}
    \end{algorithmic}
\end{algorithm}

\setlength{\textfloatsep}{8pt}
\begin{algorithm}[t]
\small
    \caption{Log-likelihood score}
    \label{alg:loglikelihood}
    \begin{algorithmic}[1]
        \Require{$T_{\text{disc}}$, $N$, Minimum Degree $d_{\min}$, Maximum Degree $d_{\max}$, Alpha $\alpha$}
        \Ensure{Log-Likelihood score for $\alpha$}
        \State $Z \gets \sum\limits_{d = d_{\min}}^{d_{\max}} d^{-\alpha}$
        \If {$Z \le 0$}
            \State \Return $- \infty$
        \EndIf
        \State $S \gets T_{\text{disc}} + N\times ln(d_{\min}-0.5)$
        \State \Return $-\alpha \times S - N \times ln(Z)$
    \end{algorithmic}
\end{algorithm}

\ralg{alg:mledp} computes $\hat{\alpha}_{\textsc{central}}$ by maximizing this DP log-likelihood objective. The computation of noisy statistics $\tilde{T}_{\text{disc}}$ and $\tilde{N}$ in lines 1-4 is same as in that in the previous algorithm, using Laplace noise with split budgets. In our experiments, we set
$\varepsilon_t=\varepsilon_n=\varepsilon/2$ for simplicity.
Using these DP estimates, MLE is numerically computed as post-processing step on line 5, with the log-likelihood computation shown in \ralg{alg:loglikelihood}.

\begin{lemma}
Using the Laplace Mechanism and Sequential Composition~\cite{dpbook}, $\hat{\alpha}_{\textsc{central}}$ computed by \ralg{alg:mledp} is $(\varepsilon_t + \varepsilon_n)$-edge differentially private. 
\end{lemma}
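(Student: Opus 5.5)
The argument follows the same template as the privacy lemma for \ralg{alg:DEDP}: we view \ralg{alg:mledp} as two Laplace mechanisms followed by a data-independent post-processing map. The graph $G$ is accessed only in lines 1--4. Lines 1--2 are deterministic functions of $G$, namely the multiset $D'$ and the statistic $T_{\text{disc}}$, and release nothing. Line 3 releases $\tilde{T}_{\text{disc}}$ by adding $\textsc{Lap}(\Delta/\varepsilon_t)$ noise with $\Delta = 2\ln\bigl(\frac{d_{\min}+1}{d_{\min}}\bigr)$. By \rlem{lma:tdsicsensitivity}, and more precisely the explicit bound $\Delta T_{\text{disc}} \le 2\ln\bigl(\frac{d_{\min}+1}{d_{\min}}\bigr)$ derived in its proof, this noise scale is at least $\Delta T_{\text{disc}}/\varepsilon_t$. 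The standard Laplace mechanism guarantee then makes $\tilde{T}_{\text{disc}}$ $\varepsilon_t$-edge-DP over neighboring graphs differing in one edge.

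Line 4 releases $\tilde{N} = |D'| + \textsc{Lap}(2/\varepsilon_n)$. The preceding lemma gives $\Delta N \le 2$, so $\tilde{N}$ is $\varepsilon_n$-edge-DP. One detail must be checked here. Both sensitivity lemmas were argued with membership defined only by $d_v \ge d_{\min}$, whereas line 1 also truncates at $d_{\max}$. The same endpoint argument still applies: an edge changes only two degrees, each by one. Crossing the $d_{\max}$ boundary changes $N$ by at most $1$ per endpoint. For $T_{\text{disc}}$, exiting at $d_{\max}$ changes $t_v$ by $\ln\bigl(\frac{d_{\max}}{d_{\min}-0.5}\bigr)$, which could exceed $\ln\bigl(\frac{d_{\min}+1}{d_{\min}}\bigr)$.

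This $d_{\max}$ boundary is the main obstacle. I would resolve it by taking $d_{\max}$ to be a public bound on all degrees, as the paper states ("a known upper bound on the node degrees"), so that no node ever crosses it and \rlem{lma:tdsicsensitivity} applies verbatim.

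Given the two mechanisms, basic sequential composition~\cite{dpbook} shows that the pair $(\tilde{T}_{\text{disc}}, \tilde{N})$ is $(\varepsilon_t+\varepsilon_n)$-edge-DP. The two noises are independent, so the joint density ratio factors into the product of the individual ratios, each bounded by $e^{\varepsilon_t}$ and $e^{\varepsilon_n}$ respectively.

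Finally, line 5 computes $\argmax_{\alpha>0}$ of the objective in \ralg{alg:loglikelihood}. This objective depends only on $(\tilde{T}_{\text{disc}}, \tilde{N})$ and the public parameters $d_{\min}, d_{\max}$, and never on $G$. Any randomness or tie-breaking in the optimizer is independent of $G$, and degenerate cases (for instance a negative $\tilde N$ or an unbounded objective) are handled by a fixed rule that also does not depend on $G$. Under these conditions the post-processing theorem gives that $\hat{\alpha}_{\textsc{central}}$ inherits $(\varepsilon_t+\varepsilon_n)$-edge-DP.
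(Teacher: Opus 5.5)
Your proposal is correct and is essentially the paper's argument, which the paper gives only implicitly in the lemma statement and the surrounding text. That argument has three parts: Laplace noise calibrated to $\Delta T_{\text{disc}} \le 2\ln\bigl(\frac{d_{\min}+1}{d_{\min}}\bigr)$ and $\Delta N \le 2$, basic sequential composition over the two releases, and post-processing for the argmax in line~5. Your check on the $d_{\max}$ cutoff is a point the paper glosses over, and your resolution matches what the paper's definition of $d_{\max}$ as a known upper bound on node degrees implicitly assumes: treat $d_{\max}$ as a public bound on every degree in the domain (e.g.\ $d_{\max} \ge |V|-1$ on the fixed node set).
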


\vspace{-0.1in}
\mynonparagraph{Discussion.}
The normalization constant $Z(\alpha)$ needs to know $d_{\max}$. In practice, if the maximum degree of the graph is known or can be assumed to be public knowledge (\eg, maximum degree in social graphs is often visible or reported), it can directly be used without adding noise. On the other hand, $d_{\max}$ is just a single scalar number, so its private estimation requires much less $\varepsilon$ budget than that for the entire degree distribution. Furthermore for power-law distributions, setting $d_{\max}$ conservatively high enough (\eg, $d_{\max}=|V|$) has very little impact on the MLE with no privacy cost. 

\section{Local Algorithms}
In the local model, each node perturbs its own information before releasing. Hence, the aggregator would never see the raw node degrees, and instead operate on noisy per-node statistics produced by local DP mechanisms.

We explore two approaches in this model, both with Laplace mechanism. The first approach computes LEDP degrees and uses them for $\hat{\alpha}_{\textsc{local}}$ estimation. The second approach is consistent with the central model; here, each node releases the noisy log-function statistic required to compute $\tilde{T}_{\text{disc}}$. These approaches result in four algorithms depending on the use of closed form discrete approximation versus numerical optimization using noisy statistics. 

\vspace{-0.03in}
\mynonparagraph{Approach 1: Release Degree Statistic.}
LEDP degree is computed with each node releasing its noisy degree~\cite{ldpsynth}. Hence, the contribution from each node $v$ is simply its DP degree estimate $\tilde{d_v}$.
Hence, $\tilde{T}_{\text{disc}}$ and $\tilde{N}$ are defined as:
\begin{equation}
\label{eq:degreestat}
\begin{aligned}
  \tilde{T}_{\text{disc}}   = \sum_{\tilde{d_v} \geq d_{\min}} \ln\!\left(\frac{\tilde{d_v}}{d_{\min}-0.5}\right) \qquad
  \tilde{N} =  \sum_{\tilde{d_v} \geq d_{\min}}  1 \\
\end{aligned}
\end{equation}

\vspace{-0.03in}
\mynonparagraph{Approach 2: Release Log Statistic.}
Here, the contribution from each node $v$ is modeled as $\tilde{c_v}$:
\begin{equation}
\label{eq:logstat}
  \tilde{c_v} =
  \begin{cases}
    \widetilde{\ln} \left(\dfrac{d_v}{d_{\min}-0.5}\right) & \quad \tilde{d_v} \geq d_{\min}\\[4pt]
    0 & \quad \tilde{d_v} < d_{\min}
  \end{cases}
\end{equation}
where $\tilde{d_v}$ is the DP estimate of $d_v$, and $\widetilde{\ln}\!\left(\frac{d_v}{d_{\min}-0.5}\right)$ denotes the DP estimate of $\ln\!\left(\frac{d_v}{d_{\min}-0.5}\right)$. Hence, $\tilde{T}_{\text{disc}}$ and $\tilde{N}$ are defined as:
\begin{equation}
\label{eq:logstatfinal}
\begin{aligned}
  \tilde{T}_{\text{disc}} = \sum_{v} \tilde{c_v} \qquad
  \tilde{N} = \hspace{-0.2in} \sum_{\tilde{c_v} \ \geq \ \ln \big(\frac{d_{\min}}{d_{\min}-0.5}\big)} \hspace{-0.2in} 1
\end{aligned}
\end{equation}

\subsection{Sensitivity Analysis}
We analyze the global sensitivity of the degree statistic and log statistic to guide the Laplace noise addition. 

\begin{lemma}
Global sensitivity of node degree is 1. 
\end{lemma}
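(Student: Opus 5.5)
The plan is to argue directly from the definition of global sensitivity, in the local setting where each node $v$ releases its own degree $d_v$. We take neighboring graphs $G$ and $G'$ on the same node set whose edge sets differ in exactly one edge $(u,w)$, as in the proof of \rlem{lma:tdsicsensitivity}. The quantity to bound is
\[
\Delta d = \max_{\text{neighbors } G, G'} \max_{v \in V} \left\lvert d_v(G) - d_v(G') \right\rvert .
\]
Each node's degree is its own query, so the relevant sensitivity is per node and not summed over nodes.

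The argument has two cases. If $v \notin \{u,w\}$, then every edge incident to $v$ is present in both graphs, so $d_v(G) = d_v(G')$ and the difference is $0$. If $v \in \{u,w\}$, the incident edge sets of $v$ in $G$ and $G'$ differ exactly by the single edge $(u,w)$. Since the graph is simple, this edge contributes exactly one to $d_v$, so $\lvert d_v(G) - d_v(G')\rvert = 1$. This gives $\Delta d \le 1$.

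For equality, any pair of graphs that differ in one edge, for example the empty graph and a single-edge graph, attains the value $1$. So the global sensitivity is exactly $1$.

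The only point needing care is the privacy accounting rather than the arithmetic. One edge affects two endpoints, so one might expect the factor $2$ used for $N$ and $T_{\text{disc}}$ in the central model. In the local model, however, each node perturbs only its own degree, so the per-release sensitivity is $1$. I would state explicitly that $\textsc{Lap}(1/\varepsilon)$ noise on $d_v$ makes each node's report $\varepsilon$-edge-DP with respect to its incident edges. Any overall guarantee that combines the two endpoints' reports on the same edge is a separate composition step outside this lemma.
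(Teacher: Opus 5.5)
Your proof is correct and follows essentially the same argument as the paper, which simply observes that adding or removing one edge changes only the degrees of its two endpoints, each by exactly 1. Your explicit casework, the attainment example, and your point that the factor of $2$ is handled by splitting the budget across the two endpoints all match how the paper uses this lemma; that split is the $\varepsilon/2$ budget split and $\textsc{Lap}(2/\varepsilon)$ noise in \ralg{alg:mleldpdegreestat}.
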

\begin{proof}
    Adding or removing one edge can only change the degree of the endpoints of that edge by 1.
\end{proof}

\begin{lemma}
    Global sensitivity of log statistic from \req{eq:logstat} is at most $
\ln (
    \frac{
      d_{\min} + 1
    }{
      d_{\min}
    }
  )
 $. 
\end{lemma}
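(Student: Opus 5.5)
The plan is to mirror the per-node argument from the proof of \rlem{lma:tdsicsensitivity}, but now for a single node, since in the local model each node $v$ releases only its own statistic. The quantity being privatized is the per-node log statistic
\[
  t_v(G) = \begin{cases} \ln\!\left(\frac{d_v(G)}{d_{\min}-0.5}\right) & d_v(G)\ge d_{\min}\\ 0 & d_v(G) < d_{\min}\end{cases},
\]
whose noisy version underlies $\tilde{c_v}$ in \req{eq:logstat}. The gating by $\tilde{d_v}$ is a post-processing of an already-private degree release, so it does not enter the sensitivity of the log statistic itself.

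First, fix neighboring graphs $G,G'$ differing in one edge $(u,w)$. If $v\notin\{u,w\}$, then $d_v$ is unchanged and so is $t_v$. For $v\in\{u,w\}$, the lemma on degree sensitivity gives $|d_v(G)-d_v(G')|=1$. Without loss of generality, write $d_v(G')=d$ and $d_v(G)=d+1$.

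Then split into three cases according to tail membership.
\begin{enumerate}[leftmargin=*]
  \item \emph{Both degrees in the tail} ($d\ge d_{\min}$). The change is $\ln\frac{d+1}{d}$. This is strictly decreasing in $d$, so it is maximized at $d=d_{\min}$, giving $\ln\frac{d_{\min}+1}{d_{\min}}$.
  \item \emph{Crossing the threshold} ($d=d_{\min}-1$, $d+1=d_{\min}$). The change is $\ln\frac{d_{\min}}{d_{\min}-0.5}$, and this must be compared with $\ln\frac{d_{\min}+1}{d_{\min}}$.
  \item \emph{Both degrees below $d_{\min}$}. The change is $0$.
\end{enumerate}

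The main obstacle is the comparison in the crossing case. It reduces to the inequality
\[
  \frac{d_{\min}}{d_{\min}-0.5}\le\frac{d_{\min}+1}{d_{\min}}
  \;\Longleftrightarrow\; d_{\min}^2\le d_{\min}^2+0.5\,d_{\min}-0.5
  \;\Longleftrightarrow\; d_{\min}\ge 1.
\]
The last condition holds for every admissible $d_{\min}$; the case $d_{\min}=1$ gives equality, $\ln 2$. The earlier proof asserted this comparison without checking it, so I would verify it explicitly here. Taking the maximum over the three cases gives the claimed bound $\ln\frac{d_{\min}+1}{d_{\min}}$. Unlike in \rlem{lma:tdsicsensitivity}, there is no factor of $2$, because the statistic belongs to a single node.
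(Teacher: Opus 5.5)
Your proposal is correct and follows the paper's proof essentially step for step. It uses the same tail-membership case split for the two endpoints of the differing edge, the same monotonicity bound $\ln\frac{d+1}{d}\le\ln\frac{d_{\min}+1}{d_{\min}}$ in the in-tail case, the same comparison of the threshold-crossing jump $\ln\frac{d_{\min}}{d_{\min}-0.5}$ against that bound, and, like the paper, gates the statistic by the true degree $d_v$. Your explicit check that this comparison is equivalent to $d_{\min}\ge 1$ (with equality $\ln 2$ at $d_{\min}=1$) fills in a step the paper asserts without verification.
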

\begin{proof}
The proof follows a similar argument to that for \rlem{lma:tdsicsensitivity} proof.
Consider neighboring graphs $G$ and $G'$ that differ in exactly one edge $(u, w)$. Only degrees of $u$ and $w$ change across these two graphs. Let $d_u'$ and $d_w'$ be respectively the degrees of nodes $u$ and $w$ in $G'$. Without loss of generality, $d_u - d_u' = 1$ and $d_w - d_w' = 1$.

The change in a node's contribution $c_v$ will depend on whether its degree remains in the tail. If it does (\ie, $d_v \geq d_{\min}$ and $d_v' \geq d_{\min}$), then the change in the contribution is:
\[
\begin{aligned}
  c_v(G) - c_v(G')
  & =
  \ln\!\left(
    \frac{
      d_v
    }{
      d_{\min} - 0.5
    }
  \right) 
   -
  \ln\!\left(
    \frac{
      d_v'
    }{
      d_{\min} - 0.5
    }
  \right) \\
  & =
  \ln\!\left(
    \frac{
      d_v
    }{
      d_v'
    }
  \right) 
   =
  \ln\!\left(
    \frac{
      d + 1
    }{ d
    }
  \right)
\end{aligned}
\]
where $d$ is an arbitrary integer such that $d \geq d_{\min}$. Because this expression is strictly decreasing in $d,$ the largest possible difference will occur when $d = d_{\min}$:
\[
\begin{aligned}
  \left\lvert c_v(G) - c_v(G')\right\rvert
  & \leq
  \ln\!\left(
    \frac{
      d_{\min} + 1
    }{
      d_{\min}
    }
  \right)
\end{aligned}
\]
When $d_{\min} = 1$, this difference becomes $\ln 2$. The difference decreases as $d_{\min}$ grows, because the logarithm function is monotonically increasing. 

When a node's degree crosses from being below $d_{\min}$ to above $d_{\min}$ (or vice versa), then its contribution changes from $0$ to
$
  \ln\!\left(
    \frac{
      d_{\min}
    }{
      d_{\min} - 0.5
    }
  \right).
$
The absolute value of this quantity is also no larger than 
$
  \ln\!\left(
    \frac{
      d_{\min} + 1
    }{
      d_{\min}
    }
  \right)
$
when $d_{\min} \geq 1$. 

Thus in all cases the sensitivity of $c_v$ is at most 
$
  \ln\!\left(
    \frac{
      d_{\min} + 1
    }{
      d_{\min}
    }
  \right)
$.
\end{proof}

\vspace{-0.035in}
The global sensitivity of $c_v$ benefits from the similar \mbox{$d_{\min}$-dependent} bound as in the central model.

\setlength{\textfloatsep}{4pt}
\begin{algorithm}[t]
\small
    \caption{DP $\hat{\alpha}_{\textsc{local}}$ via Degree Release}
    \label{alg:mleldpdegreestat}
    \begin{algorithmic}[1]
        \Require{Graph $G$, Minimum Degree $d_{\min}$, Maximum Degree $d_{\max}$, \newline Privacy Budget $\varepsilon$}
        \Ensure{DP Estimate of $\alpha$}
        \Statex \texttt{/* Step 1: Release local degree statistic */}
        \ForEach{$v \in G.V$}     
                \State $\tilde{d_v} \gets d_v + Lap(\frac{2}{\varepsilon})$
                \Comment{$\frac{\varepsilon}{2}$ budget split}
                \State \textsc{Release} $\tilde{d_v}$
        \EndFor
        \Statex 
        \Statex \texttt{/* Step 2: Aggregate local statistics */}
        \State $\tilde{T}_{\text{disc}} \gets 0$; $\tilde{N} \gets 0$
        \For{$v \in G.V$} 
        \State $\tilde{d} \gets \textsc{DegreeStatistic}(v)$
        \If{$\tilde{d} \geq d_{\min}$}
            \State $\tilde{T}_{\text{disc}} \gets \tilde{T}_{\text{disc}} + \ln(\frac{\tilde{d}}{d_{\min}-0.5})$
            \State $\tilde{N} \gets \tilde{N} + 1$
        \EndIf
        \EndFor
        \Statex 
        \Statex \texttt{/* Step 3: Estimate $\alpha$ */}
        \Statex \texttt{/* Option A: Discrete Approximation */}
        \State \quad $\hat{\alpha}_{\textsc{local}} \gets 1 + \frac{\tilde{N}}{\tilde{T}_{\text{disc}}}$
        \Statex \texttt{/* Option B: Numerical Optimization */}
        \State \quad $\hat{\alpha}_{\textsc{local}} \gets \underset{\alpha > 0}{\mathrm{argmax}}\, \textsc{LogLikelihood}(\tilde{T}_{\text{disc}}, \tilde{N}, d_{\min}, d_{\max}, \alpha)$ 
        \Statex
        \State \Return $\hat{\alpha}_{\textsc{local}}$ 
    \end{algorithmic}
\end{algorithm}

\vspace{-0.025in}
\subsection{$\hat{\alpha}_{\textsc{local}}$ via Degree Release}
\ralg{alg:mleldpdegreestat} shows the LEDP computation for $\hat{\alpha}_{\textsc{local}}$ using Laplace mechanism for degree release. 
In the first step (lines 1-4), each node releases its DP degree estimate $\tilde{d_v}$ computed using Laplace noise proportional to sensitivity 1. The $\varepsilon$ privacy budget is divided by 2 as each edge is used twice to compute the degree estimates of its two endpoints. 
The second step (lines 5-12) aggregates the local releases to compute noisy statistics $\tilde{T}_{disc}$ and $\tilde{N}$ as defined in \req{eq:degreestat}. This aggregation is post-processing using the DP degree estimates and has no privacy loss. 
Finally, these noisy statistics are used to estimate $\hat{\alpha}_{\textsc{local}}$ in step 3. This results in the following two options. 

\vspace{-0.08in}
\mynonparagraph{Option A: $\hat{\alpha}_{\textsc{local}}$ via Discrete Approximation.}
As shown on line 13, the noisy estimates are plugged into \req{eq:discapproximation} for discrete approximation of the DP estimate $\hat{\alpha}_{\textsc{local}}$. 

\vspace{-0.08in}
\mynonparagraph{Option B: $\hat{\alpha}_{\textsc{local}}$ via Numerical Optimization.}
Numerical optimization is performed using noisy $\tilde{T}_{disc}$ and $\tilde{N}$ based on the same analysis for \req{eq:numanalysislnd}. Our local DP MLE is:

\vspace{-0.06in}
\begin{equation}
\label{eq:localnumoptmle}
\begin{aligned}
\hat{\alpha}_{\textsc{local}}&=\argmax_{\alpha>0} 
 \Bigl[-\alpha\widetilde{\sum \ln d_i}-\tilde{N}\ln Z(\alpha)\Bigr]
\end{aligned}
\end{equation}

which is shown on line 14 in \ralg{alg:mleldpdegreestat}. 

\begin{lemma}
Using the Laplace Mechanism and Sequential Composition~\cite{dpbook}, $\hat{\alpha}_{\textsc{local}}$ computed by \ralg{alg:mleldpdegreestat} is $\varepsilon$-edge differentially private. 
\end{lemma}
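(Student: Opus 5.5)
The plan is to view \ralg{alg:mleldpdegreestat} as the collection of local randomizers $\mathcal{R}_v(G) = d_v + \textsc{Lap}(2/\varepsilon)$, one per node $v \in V$, followed by a deterministic aggregation (Steps 2 and 3). The proof then has three parts: bound the privacy loss of each local release, combine these losses across all nodes for a single edge change, and invoke post-processing.

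\textbf{Step 1 (each release).} By the preceding lemma, the degree of a node has global sensitivity 1. So the Laplace mechanism with scale $2/\varepsilon$ makes each individual release $\tilde{d_v}$ $(\varepsilon/2)$-edge differentially private. Concretely, for neighboring $G, G'$ the densities of $\tilde{d_v}$ satisfy
\[
\frac{p_G(x)}{p_{G'}(x)} = \exp\!\Bigl(\tfrac{\varepsilon}{2}\bigl(|x-d_v(G')| - |x-d_v(G)|\bigr)\Bigr) \le e^{\varepsilon/2|d_v(G)-d_v(G')|}.
\]

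\textbf{Step 2 (composition over nodes).} Fix neighboring graphs $G, G'$ that differ in one edge $(u,w)$. For every node $v \notin \{u,w\}$ we have $d_v(G) = d_v(G')$, so the distribution of $\tilde{d_v}$ is identical under the two graphs and contributes no privacy loss. The noise draws are independent across nodes, so the joint density of the released vector $(\tilde{d_v})_{v\in V}$ factorizes. The likelihood ratio is therefore a product of per-node ratios in which only the factors for $u$ and $w$ differ from 1. Each of these two factors is at most $e^{\varepsilon/2}$ by Step 1, so the joint ratio is at most $e^{\varepsilon}$. Integrating over any measurable set $R$ gives the $\varepsilon$-edge-DP inequality for the full release vector. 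This is sequential composition of the two releases that touch the differing edge, which is exactly why the budget is split in half.

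\textbf{Step 3 (post-processing).} Steps 2 and 3 of the algorithm (the thresholding at $d_{\min}$, the sums defining $\tilde{T}_{\text{disc}}$ and $\tilde{N}$, and either the closed form on line 13 or the argmax of \ralg{alg:loglikelihood} on line 14) read only the released $\tilde{d_v}$ and the public parameters $d_{\min}, d_{\max}$. They never access $G$. By the post-processing property of DP, $\hat{\alpha}_{\textsc{local}}$ under either option inherits $\varepsilon$-edge-DP.

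The main point needing care is Step 2. Over the whole graph this is a parallel composition of $|V|$ mechanisms, but a single edge appears in two of them. The argument must therefore explicitly use that non-endpoint nodes have identical output distributions, rather than naively summing budgets over all $|V|$ nodes. Two further conditions should be stated explicitly: the noise draws are independent across nodes, and $d_{\max}$ is treated as public, as assumed in the discussion of the central algorithms.
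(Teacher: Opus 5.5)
Your proof is correct and matches the paper's justification, which is given only in the text around the lemma: each node's degree has sensitivity 1, so $\textsc{Lap}(2/\varepsilon)$ makes each release $(\varepsilon/2)$-edge-DP, a single edge affects only the two endpoint releases so composing them gives $\varepsilon$, and Steps 2--3 are post-processing. Your explicit factorization of the joint density and your note that $d_{\max}$ must be public (as in the paper's discussion of the central algorithms) make rigorous what the paper leaves implicit.
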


\begin{algorithm}[t]
\small
    \caption{DP $\hat{\alpha}_{\textsc{local}}$ via Log Statistic Release}
    \label{alg:mleldplogstat}
    \begin{algorithmic}[1]
        \Require{Graph $G$, Minimum Degree $d_{\min}$, Maximum Degree $d_{\max}$, \newline Privacy Budget $\varepsilon$}
        \Ensure{DP Estimate of $\alpha$}
        \Statex \texttt{/* Step 1: Release local log statistic */}
        \ForEach{$v \in G.V$} 
                \State $\tilde{c_v} \gets \ln(\frac{d_v}{d_{\min}-0.5}) + Lap(\frac{2 \times \ln(\frac{d_{\min}+1}{d_{\min}})}{\varepsilon})$    
                \Comment{$\frac{\varepsilon}{2}$ budget split}
                \State \textsc{Release} $\tilde{c_v}$
        \EndFor
        
        \Statex 
        \Statex \texttt{/* Step 2: Aggregate local statistics */}
        \State $\tilde{T}_{\text{disc}} \gets 0$; $\tilde{N} \gets 0$
        \For{$v \in G.V$} 
        \State $\tilde{c} \gets \textsc{LogStatistic}(v)$
        \If{$\tilde{c} \ge \ln(\frac{d_{\min}}{d_{\min}-0.5}$)}
            \State $\tilde{T}_{\text{disc}} \gets \tilde{T}_{\text{disc}} + \tilde{c}$
            \State $\tilde{N} \gets \tilde{N} + 1$
        \EndIf
        \EndFor
        \Statex
        \Statex \texttt{/* Step 3: Estimate $\alpha$ */}
        \Statex \texttt{/* Option A: Discrete Approximation */}
        \State \quad $\hat{\alpha}_{\textsc{local}} \gets 1 + \frac{\tilde{N}}{\tilde{T}_{\text{disc}}}$
        \Statex \texttt{/* Option B: Numerical Optimization */}
        \State \quad $\hat{\alpha}_{\textsc{local}} \gets \underset{\alpha>0}{\mathrm{argmax}}\, \textsc{LogLikelihood}(\tilde{T}_{\text{disc}}, \tilde{N}, d_{\min}, d_{\max}, \alpha)$ 
        \Statex
        \State \Return $\hat{\alpha}_{\textsc{local}}$
    \end{algorithmic}
\end{algorithm}

\vspace{-0.06in}
\subsection{$\hat{\alpha}_{\textsc{local}}$ via Log Statistic Release}
\ralg{alg:mleldplogstat} shows the LEDP computation for $\hat{\alpha}_{\textsc{local}}$ using Laplace mechanism for log statistic release. In the first step (lines 1-4), each node releases its DP estimate of the log statistic using Laplace noise proportional to global sensitivity $
\ln (
    \frac{
      d_{\min} + 1
    }{
      d_{\min}
    }
  )
 $ with the privacy budget split between two edge endpoints.
The second step (lines 5-12) aggregates local contributions as post-processing to compute noisy statistics $\tilde{T}_{disc}$ and $\tilde{N}$ as defined in \req{eq:logstatfinal}. 
These noisy statistics are used to estimate $\hat{\alpha}_{\textsc{local}}$ in step 3, resulting in following two options. 

\vspace{-0.08in}
\mynonparagraph{Option A: $\hat{\alpha}_{\textsc{local}}$ via Discrete Approximation.}
\req{eq:discapproximation} is used for discrete approximation of the DP estimate $\hat{\alpha}_{\textsc{local}}$ using the noisy estimates (line 13 in \ralg{alg:mleldplogstat}).

\vspace{-0.08in}
\mynonparagraph{Option B: $\hat{\alpha}_{\textsc{local}}$ via Numerical Optimization.}
Numerical optimization is performed (line 14 in \ralg{alg:mleldplogstat}) using noisy $\tilde{T}_{disc}$ and $\tilde{N}$ for local DP MLE defined in \req{eq:localnumoptmle}.

\vspace{-0.02in}
\begin{lemma}
Using the Laplace Mechanism and Sequential Composition~\cite{dpbook}, $\hat{\alpha}_{\textsc{local}}$ computed by \ralg{alg:mleldplogstat} is $\varepsilon$-edge differentially private. 
\end{lemma}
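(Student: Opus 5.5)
The argument treats the whole of Step~1 of \ralg{alg:mleldplogstat} as one randomized map from $G$ to the vector $(\tilde{c}_v)_{v\in V}$. It then applies the Laplace mechanism to each coordinate and composes across the coordinates touched by a single edge. Everything after Step~1 is post-processing.

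\textbf{Step 1: per-node Laplace mechanism.} Fix a node $v$ and consider the query $f_v(G)=\ln\!\left(\frac{d_v}{d_{\min}-0.5}\right)$, using the clipped contribution $c_v$ from \req{eq:logstat}. By the preceding sensitivity lemma for the log statistic, $|f_v(G)-f_v(G')|\le \Delta_c:=\ln\!\left(\frac{d_{\min}+1}{d_{\min}}\right)$ for edge-neighbors $G,G'$. Line~2 adds $\textsc{Lap}(2\Delta_c/\varepsilon)=\textsc{Lap}(\Delta_c/(\varepsilon/2))$. So by the Laplace mechanism, the release $\tilde{c}_v$ is $(\varepsilon/2)$-edge-DP with respect to any single edge incident to $v$.

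\textbf{Step 2: composition over the two endpoints.} Let $G,G'$ differ in the edge $(u,w)$. For every $v\notin\{u,w\}$, $d_v$ is identical in both graphs, so $\tilde{c}_v$ has the same distribution under $G$ and $G'$ and contributes no privacy loss. The noises are independent across nodes, so the joint density of $(\tilde{c}_v)_v$ factorizes. The density ratio between $G$ and $G'$ is therefore the product of the ratios for $\tilde{c}_u$ and $\tilde{c}_w$, each bounded by $e^{\varepsilon/2}$. This gives a total bound of $e^{\varepsilon}$, which is sequential composition of two $\varepsilon/2$ mechanisms. Integrating over any measurable set of output vectors yields $\varepsilon$-edge-DP for the full release.

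\textbf{Step 3: post-processing.} Lines 5--12 compute $\tilde{T}_{\text{disc}}$ and $\tilde{N}$ as deterministic functions of the released $\tilde{c}_v$. This includes the thresholding at $\ln\!\left(\frac{d_{\min}}{d_{\min}-0.5}\right)$, which uses only the public $d_{\min}$. Both Option~A and Option~B (the argmax of \ralg{alg:loglikelihood} with public $d_{\min},d_{\max}$) are also functions of these values alone. Post-processing therefore preserves $\varepsilon$-edge-DP for $\hat{\alpha}_{\textsc{local}}$.

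\textbf{Main obstacle.} The delicate point is Step~1: making sure the quantity actually released matches the quantity whose sensitivity was bounded. Line~2 applies $\ln(d_v/(d_{\min}-0.5))$ to every node, including those with $d_v<d_{\min}$, and even $d_v=0$. For those nodes, consecutive degrees can differ in log by more than $\Delta_c$; for example, $\ln 1-\ln 0$ is unbounded. I would handle this by interpreting the released value as the clipped statistic $c_v$ of \req{eq:logstat}, equal to $0$ below $d_{\min}$, or equivalently as $\ln(\max(d_v,d_{\min}-0.5)/(d_{\min}-0.5))$. The earlier lemma's case analysis (in-tail vs.\ crossing the threshold) then gives sensitivity at most $\Delta_c$ for every node. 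This is the reading under which the lemma is stated, and I would state it explicitly in the proof.
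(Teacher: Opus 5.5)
Your proof is correct and follows the same route the paper indicates through the lemma statement and the surrounding description: a per-node Laplace release at budget $\varepsilon/2$ justified by the log-statistic sensitivity lemma, composition over the two endpoints of the differing edge, and post-processing for Steps 2--3. Your remark that Line~2 must be read as releasing the clipped statistic (zero below $d_{\min}$) is a valid point that the paper leaves implicit. Without clipping, $\ln(d_v/(d_{\min}-0.5))$ has sensitivity larger than $\ln\bigl(\frac{d_{\min}+1}{d_{\min}}\bigr)$ for $d_v<d_{\min}$, and it is unbounded at $d_v=0$.
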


\begin{table}[t]
    \centering
    \caption{Graph datasets \& their power-law scaling parameter $\alpha$.}
    \label{tab:datasummary}
    \vspace{-0.13in}
    \begin{adjustbox}{max width=\columnwidth}
    \begin{tabular}{@{}lrrrr@{}}
    \toprule
    \multirow{2}{*}{Graph} & \multirow{2}{*}{Nodes} & \multirow{2}{*}{Edges} & \multicolumn{2}{c}{\hspace{0.2in} Power-law $\alpha$} \\
    & & & \hspace{0.2in} $d_{\min} = 1$ & $d_{\min} = 3$ \\
    \midrule
    wiki & 7,115 & 414,750 & 1.176 & 1.474 \\
    enron & 36,692 & 367,661 & 1.494 & 1.918 \\
    brightkite & 58,228 & 428,156& 1.551 & 1.982 \\
    ego-twitter & 81,306 & 4,841,488& 1.187 & 1.372 \\
    gplus & 107,614 & 60,989,732& 1.126 & 1.222 \\
    stanford & 281,903 & 2,312,497& 1.459 & 2.218 \\ \hline
    syn-power-0 & 100,000 & 1,477,208 & 2.000 & 2.000 \\
    syn-power-1 & 100,000 & 4,010,327 & 2.500 & 2.500 \\
    syn-power-2 & 100,000 & 997,299 & 3.000 & 3.000 \\
    \bottomrule
    \end{tabular}
    \end{adjustbox}
    \vspace{0.01in}
\end{table}

\begingroup
\setlength{\intextsep}{4pt}
\begin{table}[b]
\vspace{0.05in}
\centering
\small
\captionsetup{aboveskip=2pt,belowskip=2pt}
\caption{Algorithm labels used in the evaluation.}
\label{tab:alglabels}
\begin{adjustbox}{max width=\columnwidth}
\begin{tabular}{l c c c l}
\toprule
Label & Model & Release & Estimator & Reference \\
\midrule
\textbf{\textsc{DA}} & Centralized & -- & Discrete Approx. & \ralg{alg:DEDP} \\
\textbf{\textsc{NO}} & Centralized & -- & Numerical Opt. & \ralg{alg:mledp} \\
\textbf{\textsc{DA/DR}} & Local & Degree & Discrete Approx. & \ralg{alg:mleldpdegreestat} (A) \\
\textbf{\textsc{DA/LR}} & Local & Log-Statistic & Discrete Approx. & \ralg{alg:mleldplogstat} (A) \\
\textbf{\textsc{NO/DR}} & Local & Degree & Numerical Opt. & \ralg{alg:mleldpdegreestat} (B) \\
\textbf{\textsc{NO/LR}} & Local & Log-Statistic & Numerical Opt. & \ralg{alg:mleldplogstat} (B) \\
\bottomrule
\end{tabular}
\end{adjustbox}
\end{table}
\endgroup

\vspace{-0.06in}
\section{Experimental Evaluation}
\label{sec:evaluation}
In this section, we evaluate the accuracy of our \eedp{} $\alpha$ estimation algorithms and answer the following research questions:
\begin{description}[noitemsep,leftmargin=*]
\item[RQ1.] Does adding noise directly to sub-components of $\alpha$ estimator provide better estimates compared to the degree distribution based power-law fitting?
\item[RQ2.] How does the accuracy compare for numerical optimization using noisy estimates instead of directly using the closed form discrete approximation?
\item[RQ3.] Does degree release based approach in local model provide higher accuracy compared to solutions based on local log statistic release?
\item[RQ4.] How does the choice of \(d_{\min}\) affect the accuracy and stability of private \(\alpha\) estimation? 

\end{description}

\vspace{-0.03in}
\mynonparagraph{Algorithms.}
With different combinations for estimation methods (discrete approximation versus numerical optimization) and local statistic release (degree versus log statistic), we evaluate the centralized edge DP and LEDP algorithms listed in \rtab{tab:alglabels}.

We compare against the \eedp{} degree distribution based power-law fitting approach from Hay et al.~\cite{degreedistribution} which is developed for central model. This is called \textbf{\textsc{Base}}.

\vspace{-0.03in}
\mynonparagraph{Datasets.}
We test our algorithms on 9 graph datasets: 6 publicly-available datasets from repository ~\cite{ledpcode} based on SNAP~\cite{snapdatasets} (treated to be undirected) and 3 synthetic datasets. \rtab{tab:datasummary} summarizes the datasets. The power-law scaling parameter $\alpha$ values are mostly below 3 for $d_{\min}$ between 1 and 3; this is consistent with previous observations \cite{clauset2009powerlaws} where $\alpha$ is typically below 3. The synthetic datasets are generated using \textsc{Inc-Powerlaw} generator~\cite{powerlawsyn} that produces a simple random graph conforming with a degree sequence corresponding to the given scaling parameter $\alpha$. 

\vspace{-0.05in}
\mynonparagraph{Methodology.}
We set $\varepsilon$ privacy budget to $1.0$  for our experiments. We also conducted experiments where the $\varepsilon$ value is varied between 0.1 and 5 to study performance across different privacy budgets. We report results for $d_{\min}$ values of 1 and 3; while we also considered with $d_{\min}$ values 5 and 10, the non-private MLE of $\alpha$ was outside of the $[0,\infty]$ range which is semantically invalid. For accuracy metric, we measure the $l_1$ error compared to the non-private $\alpha$ parameter of each dataset. Each experiment was repeated 20 times and we report the mean and standard deviation. 

\begin{table}[t]
\centering
\small
\caption{Summary for centralized model. Mean $l_1$ is averaged over all runs and datasets, max $l_1$ is the worst case, and std. range gives the 
per-dataset standard deviation range over 20 runs. Bold marks the lowest mean and max for each $d_{\min}$.}
\label{tab:centralsummary}
\vspace{-0.1in}
\begin{adjustbox}{max width=\columnwidth}
\begingroup
\renewcommand{\arraystretch}{0.98}
\setlength{\aboverulesep}{0pt}
\setlength{\belowrulesep}{0pt}
\setlength{\cmidrulesep}{0.5pt}
\begin{tabular}{l c c c}
\toprule
Method & Mean $l_1$ (\%) & Max $l_1$ (\%) & Std.\ range \\[-0.1em]
\midrule
\multicolumn{4}{c}{$d_{\min}=1$} \\
\cmidrule(lr){1-4}
\textsc{Base} & 15.88 & 76.70 & 5.249--16.443 \\
\textsc{DA} & 9.57 & 17.69 & 0.00063--0.01459 \\
\textsc{NO} & \textbf{0.0049} & \textbf{0.0989} & 0.00075--0.02512 \\
\midrule
\multicolumn{4}{c}{$d_{\min}=3$} \\
\cmidrule(lr){1-4}
\textsc{Base} & 21.79 & 89.80 & 6.962--16.850 \\
\textsc{DA} & 5.28 & 9.47 & 0.00111--0.02877 \\
\textsc{NO} & \textbf{0.0066} & \textbf{0.1115} & 0.00079--0.03200 \\
\bottomrule
\end{tabular}
\endgroup
\end{adjustbox}
\vspace{-0.05in}
\end{table}

\begin{figure}[t]
\vspace{-0.1in}
    \centering
    {\centering\footnotesize
    \textbf{\textsc{Base}} \textcolor[HTML]{1A80BB}{\rule{0.8em}{0.8em}}\quad
    \textbf{\textsc{DA}} \textcolor[HTML]{E9C716}{\rule{0.8em}{0.8em}}\quad
    \textbf{\textsc{NO}} \textcolor[HTML]{BC272D}{\rule{0.8em}{0.8em}}\par}
    \vspace{0.3em}
    \begin{subfigure}[t]{0.6\linewidth}
        \centering
        \includegraphics[height=0.125\textheight,trim={0 10 8 10},clip]{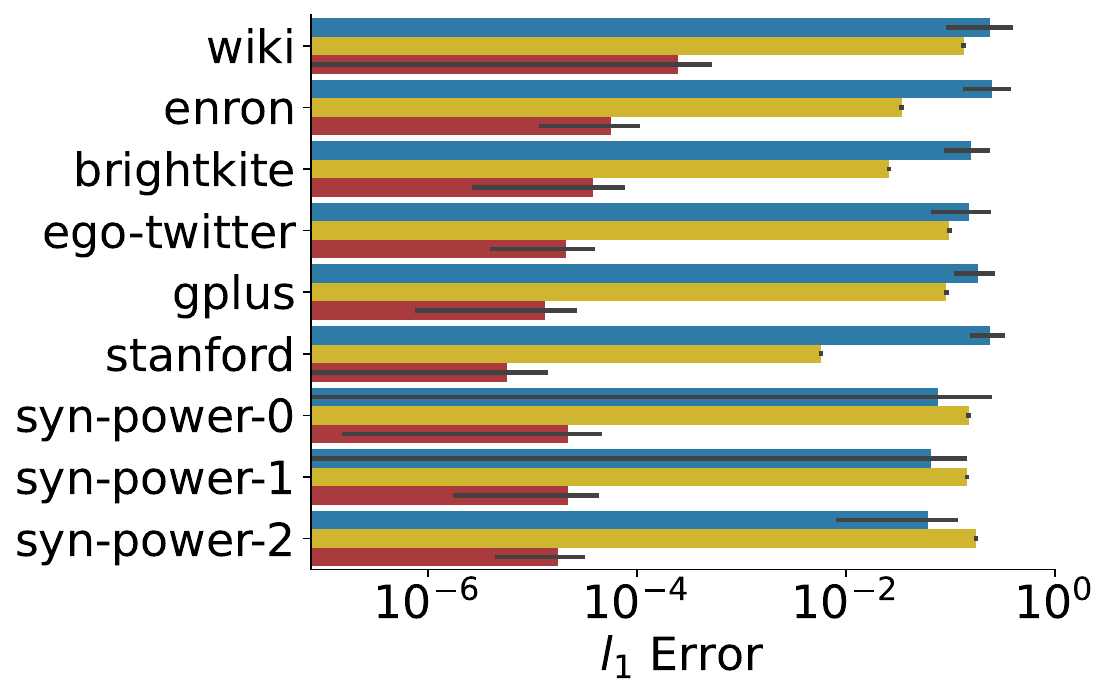}
        \vspace{-0.02in}
        \caption{$d_{\min}=1$}
        \label{fig:centralkmin1}
    \end{subfigure}%
    \hfill%
    \begin{subfigure}[t]{0.4\linewidth}
        \centering
        \includegraphics[height=0.125\textheight,trim={0 8 0 10},clip]{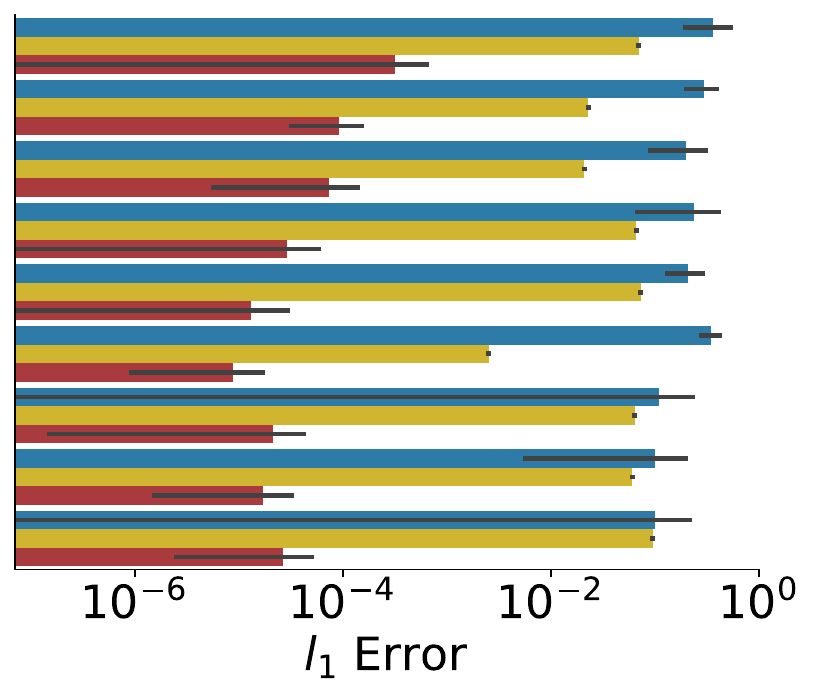}
        \vspace{-0.17in}
        \caption{$d_{\min}=3$}
        \label{fig:centralkmin3}
    \end{subfigure}
    \vspace{-0.12in}
    \caption{Performance of centralized DP algorithms. $l_1$ errors of \textsc{NO}, \textsc{DA}, and \textsc{Base}.}
    \label{fig:centralresults}
\end{figure}

\vspace{-0.04in}
\subsection{Centralized Algorithms}
To answer RQ1 for $\hat{\alpha}_{\textsc{central}}$ estimates, we
compare \textsc{DA} and \textsc{NO} with \textsc{Base}.
\rfig{fig:centralresults} and \rtab{tab:centralsummary} summarize the results. 
Detailed results are available in \rtab{tab:centralperdataset} in \rapp{sec:detailedresults}.

\textsc{NO} is the strongest central method. Its mean
$l_1$ error is roughly three orders of magnitude lower than \textsc{DA}
at both $d_{\min}=1$ and $d_{\min}=3$, and its worst-case error is
about two orders of magnitude lower. This advantage holds on every
individual dataset (Table~\ref{tab:centralperdataset}).
On average, \textsc{DA} is more accurate than \textsc{Base}, at both $d_{\min}$ values, though
\textsc{Base} outperforms \textsc{DA} on the three synthetic power-law datasets at $d_{\min}=1$.
The gap between \textsc{NO} and \textsc{DA} is because \textsc{DA} uses a
closed-form approximation of $\alpha$, while \textsc{NO} optimizes the
exact discrete log-likelihood.

Increasing $d_{\min}$ worsens performance for
\textsc{Base} on every dataset and for \textsc{NO} on most datasets;
\textsc{DA}, in contrast, improves at $d_{\min}=3$ on every dataset.

\subsection{Local Algorithms}
To answer RQ2 and RQ3 for $\hat{\alpha}_{\textsc{local}}$ estimates,
we compare \textsc{NO/LR}, \textsc{DA/LR}, \textsc{NO/DR} and
\textsc{DA/DR}. 
\rfig{fig:localresults} and \rtab{tab:localsummary} summarize the results. 
Detailed results are available in \rtab{tab:localperdataset} in \rapp{sec:detailedresults}.

\textsc{NO/DR} is the strongest local variant overall. Mean $l_1$
error follows \textsc{NO/DR} < \textsc{NO/LR} < \textsc{DA/LR} <
\textsc{DA/DR} at both $d_{\min}=1$ and $d_{\min}=3$
(Table~\ref{tab:localsummary}), and \textsc{NO/DR} has the lowest
per-dataset error on 6 of 9 datasets at each $d_{\min}$ (Table~\ref{tab:localperdataset}).
Release mode interacts with the
estimator: \textsc{NO} prefers degree release, while \textsc{DA} prefers log-statistic release, on most datasets at both $d_{\min}$ values.
All four local variants have lower worst-case error than
\textsc{Base}.
Increasing $d_{\min}$ helps all local variants except
\textsc{NO/DR}; at $d_{\min}=3$, \textsc{DA/LR} improves on 8 of 9
datasets, \textsc{NO/LR} on 7 of 9, and \textsc{DA/DR} on 6 of 9,
while \textsc{NO/DR} worsens on 7 of 9 (Table~\ref{tab:localperdataset}).

\begin{figure}[t]
    \centering
    {\centering\footnotesize
    \textbf{\textsc{NO/LR}} \textcolor[HTML]{A559AA}{\rule{0.8em}{0.8em}}\quad
    \textbf{\textsc{DA/LR}} \textcolor[HTML]{8CCEE3}{\rule{0.8em}{0.8em}}\quad
    \textbf{\textsc{NO/DR}} \textcolor[HTML]{F55F74}{\rule{0.8em}{0.8em}}\quad
    \textbf{\textsc{DA/DR}} \textcolor[HTML]{0D7D87}{\rule{0.8em}{0.8em}}\par}
    \vspace{0.3em}
    \begin{subfigure}[t]{0.6\linewidth}
        \centering
        \includegraphics[height=0.125\textheight,trim={0 10 8 10},clip]{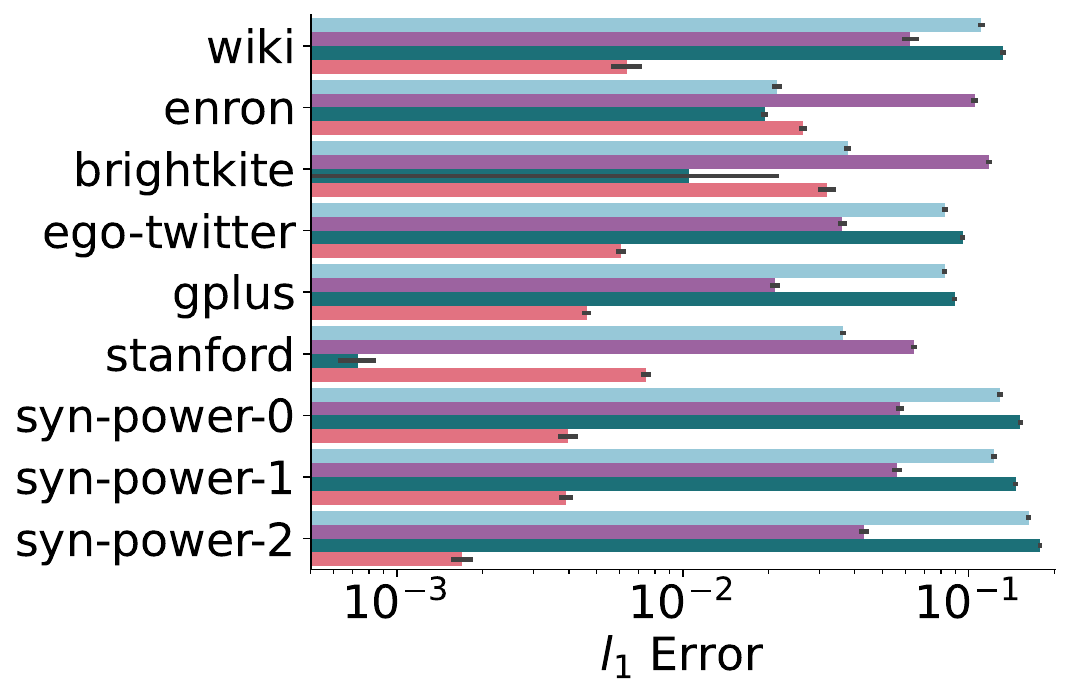}
        \vspace{-0.05in}
        \caption{$d_{\min}=1$}
        \label{fig:localkmin1}
    \end{subfigure}%
    \hfill%
    \begin{subfigure}[t]{0.4\linewidth}
        \centering
        \includegraphics[height=0.125\textheight,trim={0 8 0 10},clip]{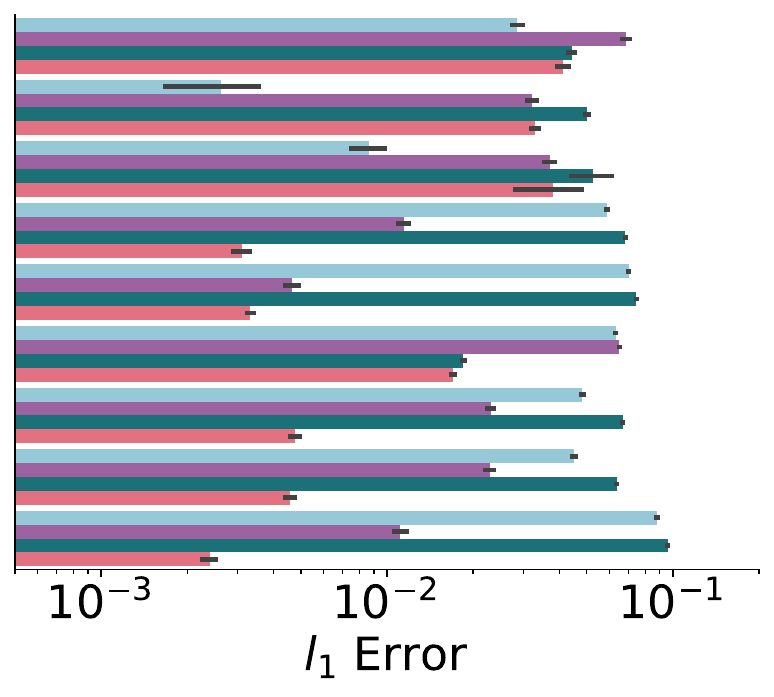}
        \vspace{-0.05in}
        \caption{$d_{\min}=3$}
        \label{fig:localkmin3}
    \end{subfigure}
    \vspace{-0.1in}
    \caption{Performance of local DP algorithms. $l_1$ errors across different combinations for MLE (discrete approximation versus numerical optimization) and local statistic release (degree versus log statistic).}
    \label{fig:localresults}
    \vspace{-0.12in}
\end{figure}

\begin{figure*}[t]
\centering
\captionsetup[subfigure]{skip=2pt}
\captionsetup{aboveskip=3pt}
\begin{minipage}{0.78\textwidth}
\centering\footnotesize
\textbf{Centralized: \ }
\textbf{\textsc{NO}} \textcolor[HTML]{BC272D}{\rule{0.8em}{0.8em}} \quad
\textbf{\textsc{DA}} \textcolor[HTML]{E9C716}{\rule{0.8em}{0.8em}} \qquad \qquad
\textbf{Local: \ }
\textbf{\textsc{NO/LR}} \textcolor[HTML]{A559AA}{\rule{0.8em}{0.8em}} \quad
\textbf{\textsc{DA/LR}} \textcolor[HTML]{8CCEE3}{\rule{0.8em}{0.8em}} \quad
\textbf{\textsc{NO/DR}} \textcolor[HTML]{F55F74}{\rule{0.8em}{0.8em}} \quad
\textbf{\textsc{DA/DR}} \textcolor[HTML]{0D7D87}{\rule{0.8em}{0.8em}} \par
\end{minipage}
\vspace{0.04in}

\begin{minipage}[t]{0.32\textwidth}
    \centering
    \begin{subfigure}[t]{0.485\linewidth}
        \centering
        \includegraphics[width=\linewidth,trim={0 5 5 0},clip]{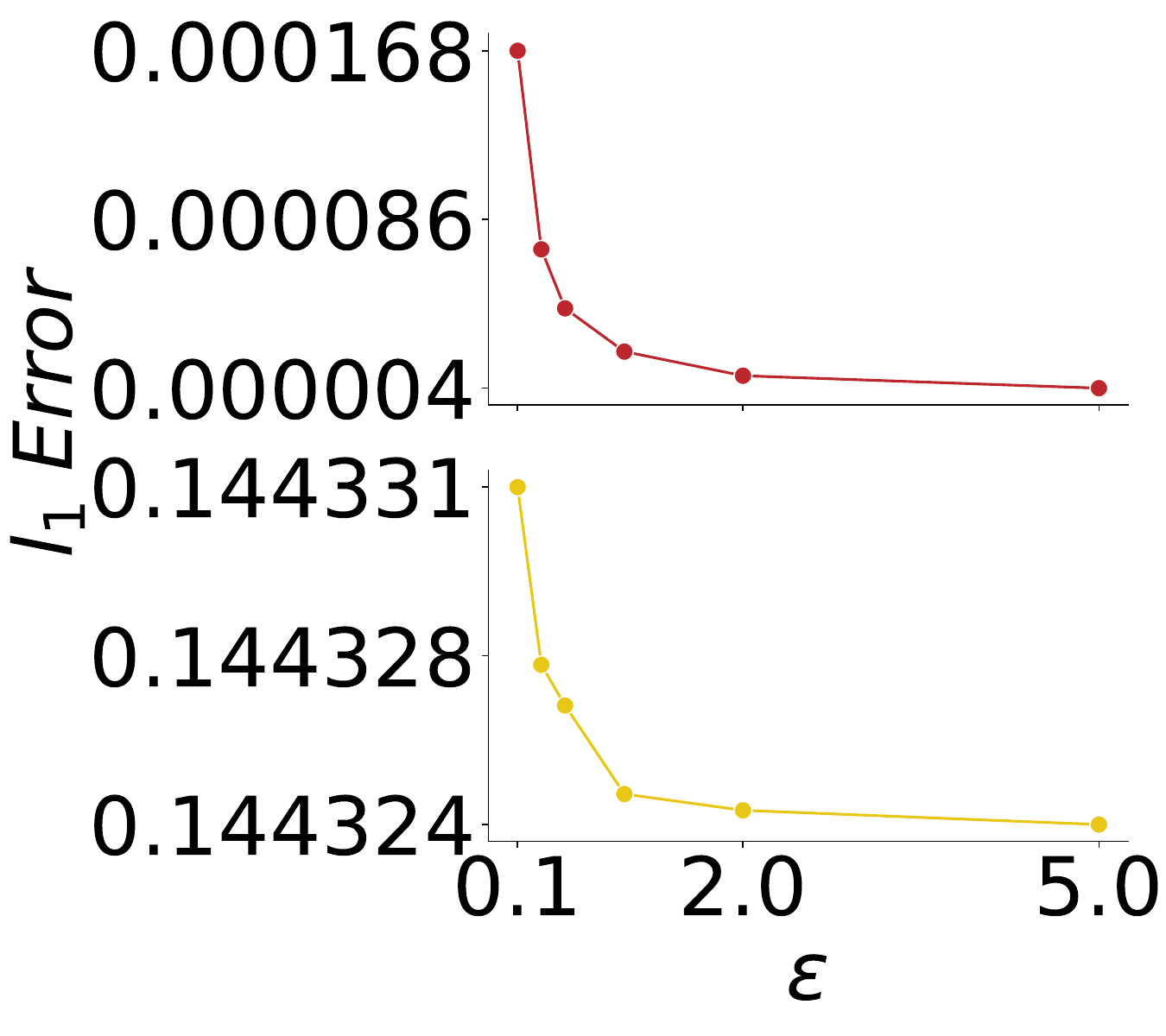}
        \caption{Centralized.}
        \label{fig:centralsynepsilon}
    \end{subfigure}%
    \hfill%
    \begin{subfigure}[t]{0.485\linewidth}
        \centering
        \includegraphics[width=\linewidth,trim={0 5 0 0},clip]{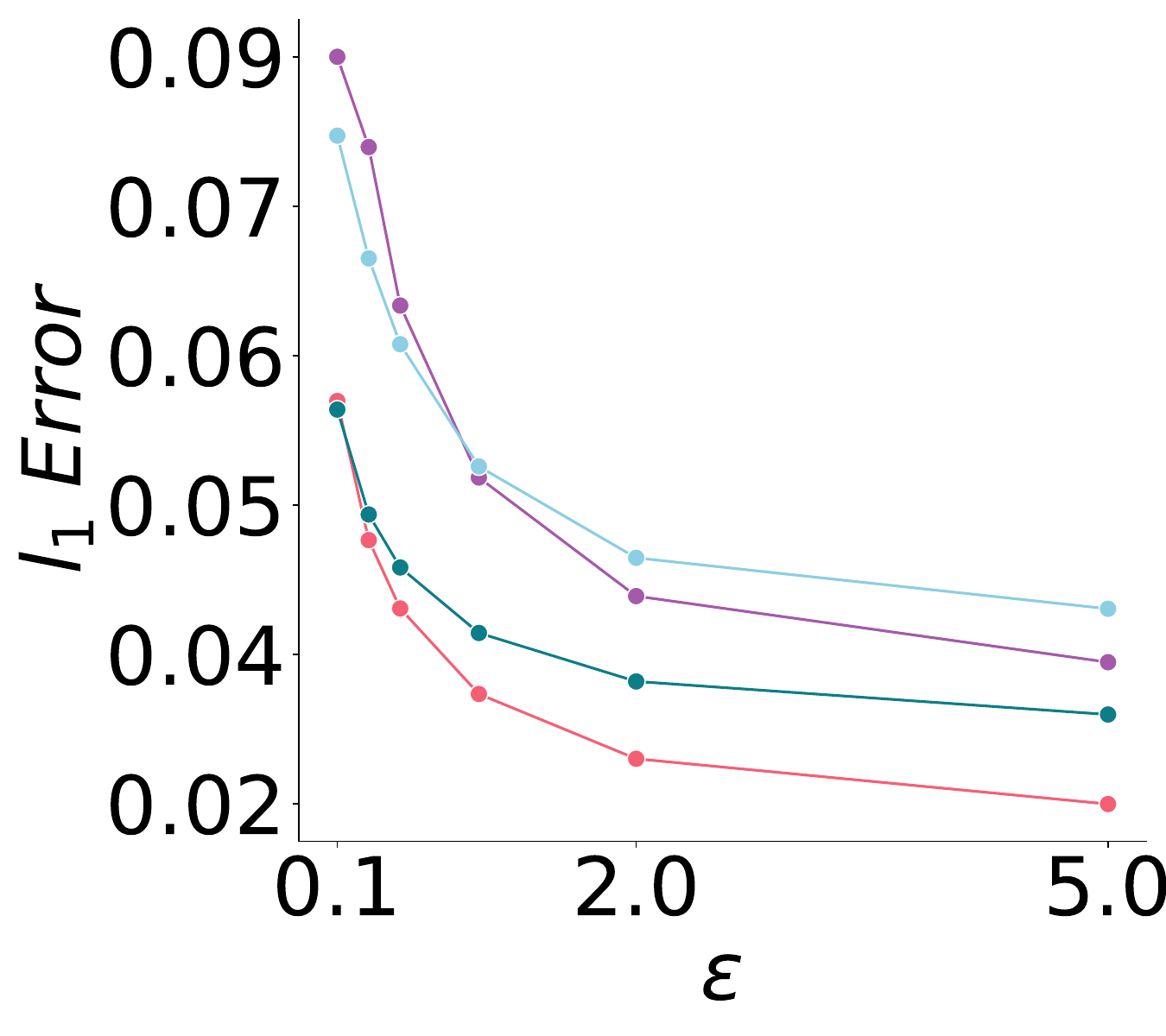}
        \caption{Local.}
        \label{fig:localsynepsilon}
    \end{subfigure}
    \addtocounter{figure}{-1}
    \captionof{figure}{Syn-power-1: $l_1$ vs $\varepsilon$.}
    \label{fig:varyingsynpower}
\end{minipage}%
\hfill
\stepcounter{figure}
\begin{minipage}[t]{0.32\textwidth}
    \centering
    \begin{subfigure}[t]{0.485\linewidth}
        \centering
        \includegraphics[width=\linewidth,trim={0 5 5 0},clip]{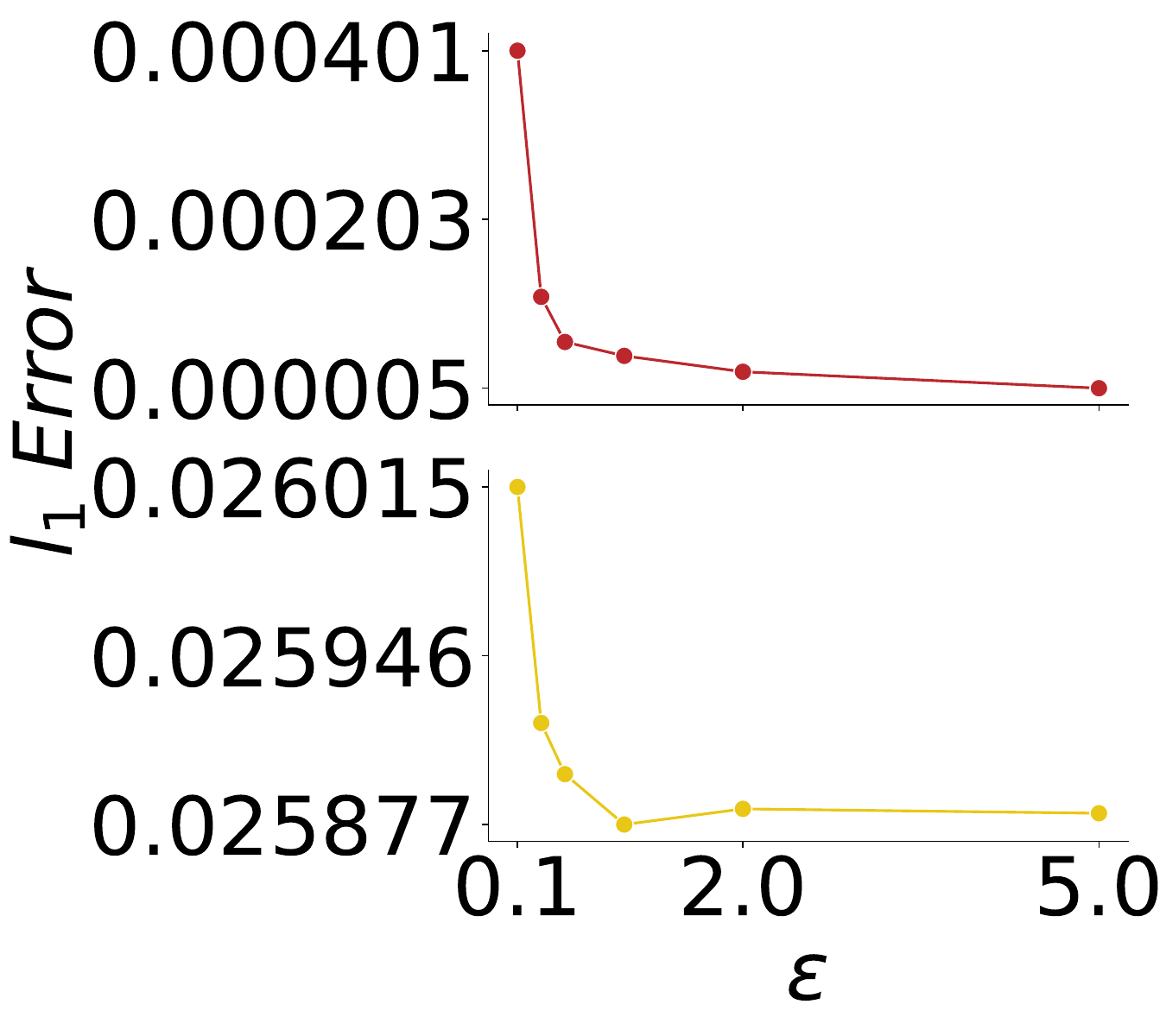}
        \caption{Centralized.}
        \label{fig:centralbrightkite}
    \end{subfigure}%
    \hfill%
    \begin{subfigure}[t]{0.485\linewidth}
        \centering
        \includegraphics[width=\linewidth,trim={0 5 0 0},clip]{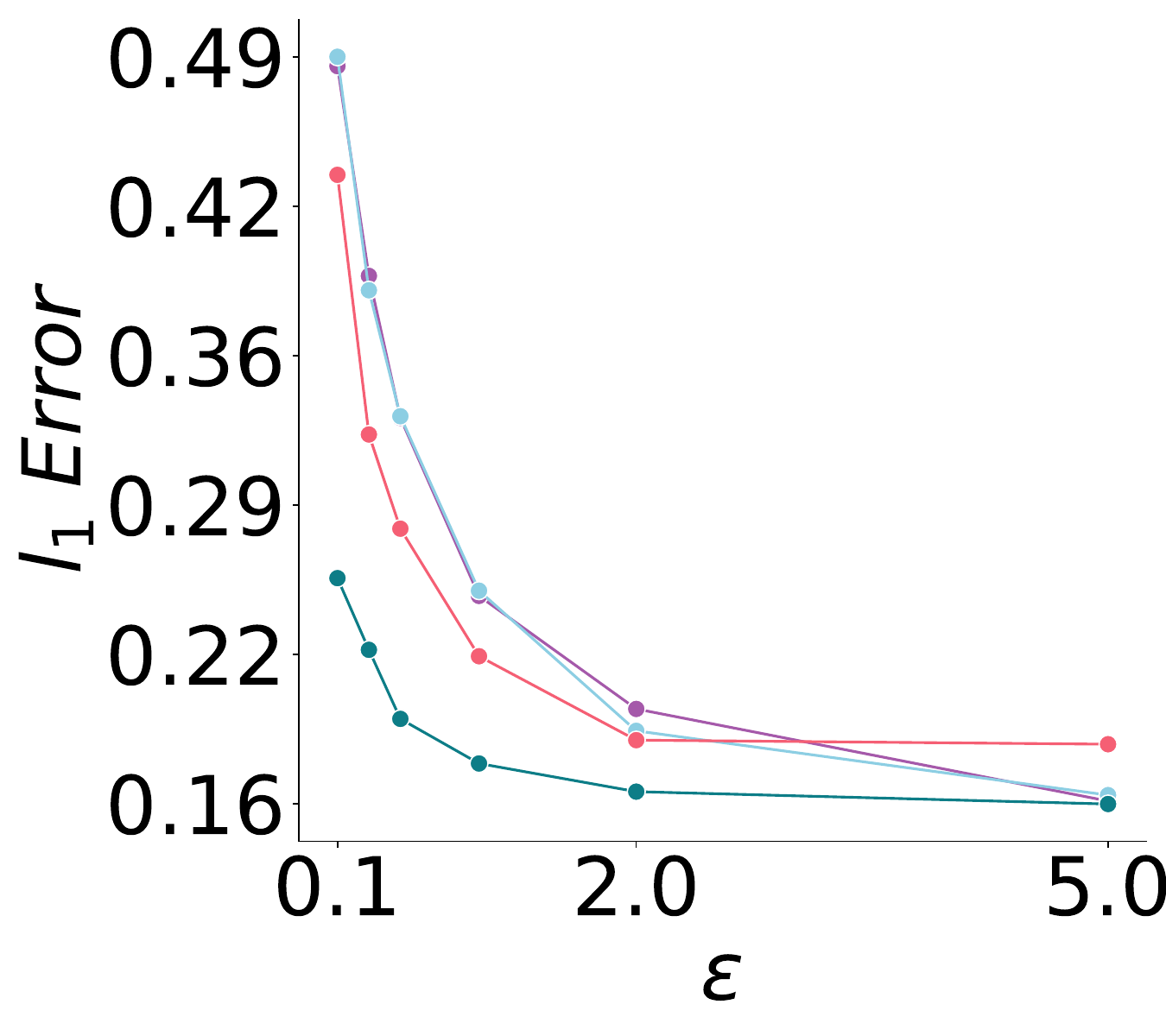}
        \caption{Local.}
        \label{fig:localbrightkite}
    \end{subfigure}
    \addtocounter{figure}{-1}
    \captionof{figure}{Brightkite: $l_1$ vs $\varepsilon$.}
    \label{fig:varyingbrightkite}
\end{minipage}%
\hfill
\stepcounter{figure}
\begin{minipage}[t]{0.32\textwidth}
    \centering
    \begin{subfigure}[t]{0.485\linewidth}
        \centering
        \includegraphics[width=\linewidth,trim={0 5 5 0},clip]{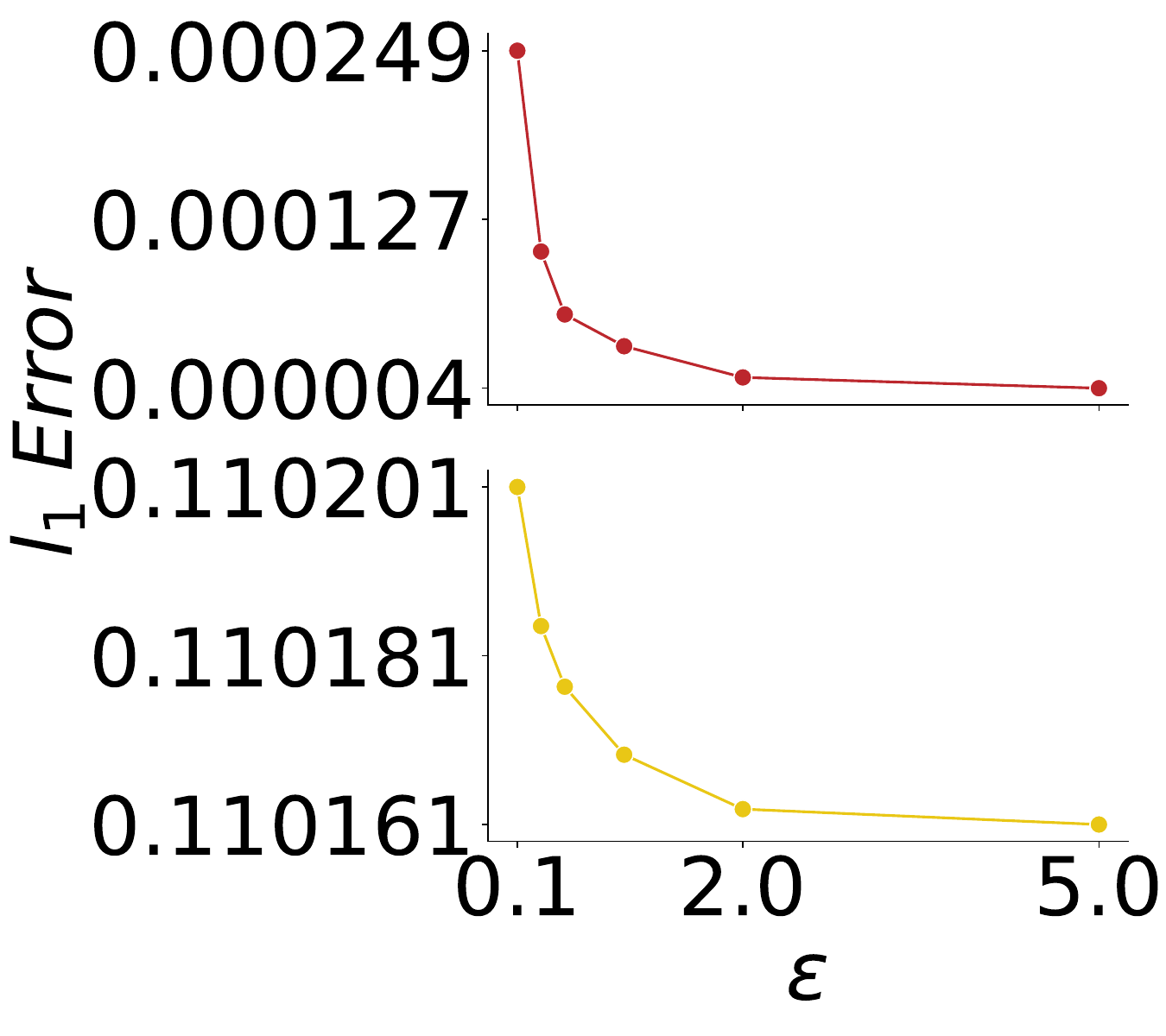}
        \caption{Centralized.}
        \label{fig:centralegotwitter}
    \end{subfigure}%
    \hfill%
    \begin{subfigure}[t]{0.485\linewidth}
        \centering
        \includegraphics[width=\linewidth,trim={0 5 0 10},clip]{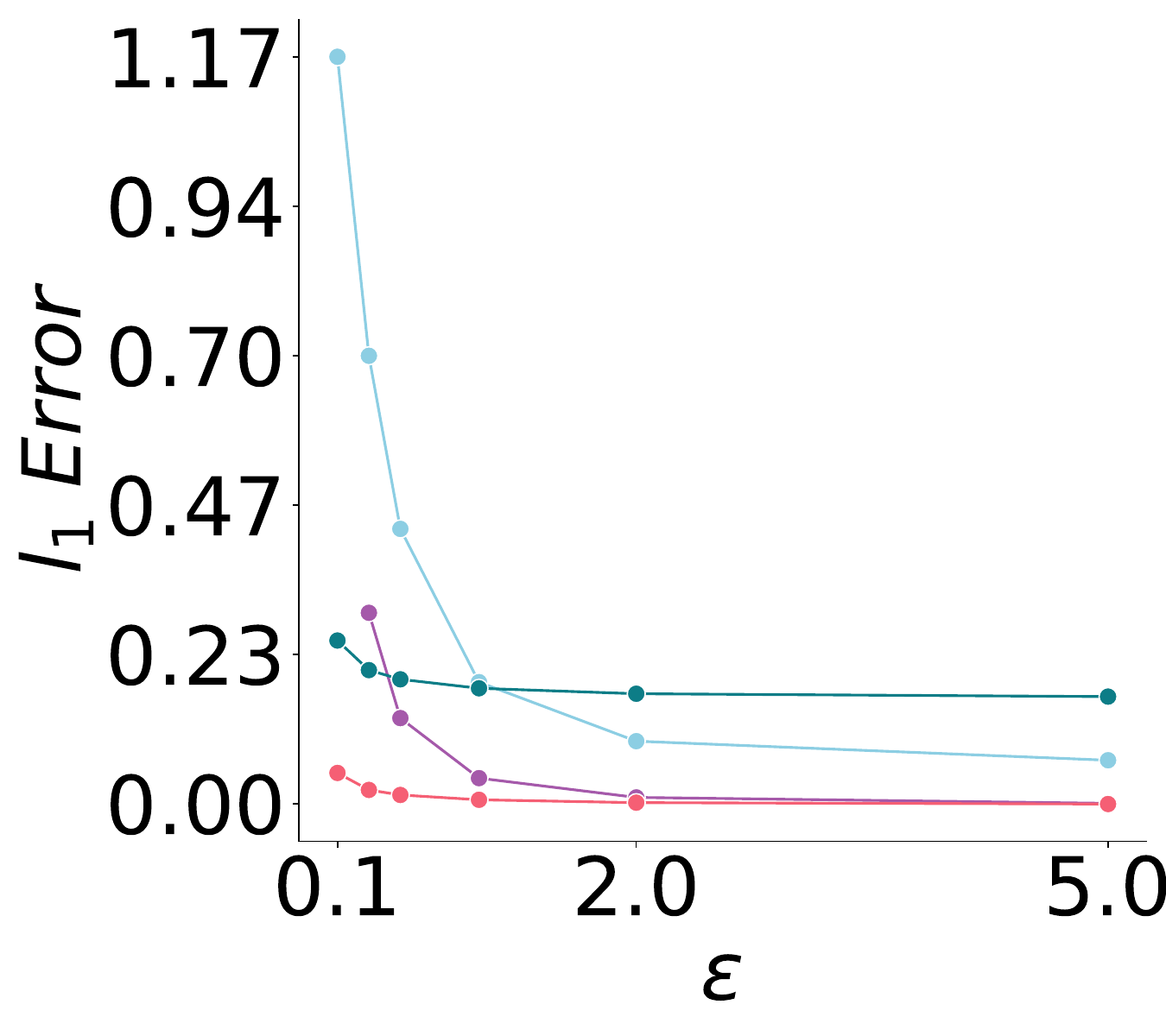}
        \caption{Local.}
        \label{fig:localegotwitter}
    \end{subfigure}
    \addtocounter{figure}{-1}
    \captionof{figure}{Ego-twitter: $l_1$ vs $\varepsilon$.}
    \label{fig:varyingegotwitter}
\end{minipage}
\vspace{-0.12in}
\end{figure*}

\subsection{Sensitivity to Privacy Budget}
We analyze how $l_1$ error changes with privacy budget $\varepsilon$ at
fixed $d_{\min}=1$. \rfig{fig:varyingsynpower},~\rfig{fig:varyingbrightkite}, and~\rfig{fig:varyingegotwitter} show results for three datasets.
Detailed $l_1$ values are available in ~\rtab{tab:epssummary} in \rapp{sec:detailedresults}.

\vspace{-0.02in}
\mynonparagraph{Syn-power-1.}
Figure~\ref{fig:centralsynepsilon} (central, $d_{\min}=1$) shows a nearly flat trend for mean $l_1$ error across varying $\varepsilon$, 
with both
central variants changing very little as privacy budget increases.
In this dataset, \textsc{NO}
has far lower error than \textsc{DA} at every $\varepsilon$. \textsc{DA}
stays nearly same throughout while \textsc{NO} drops as $\varepsilon$ increases. 

Figure~\ref{fig:localsynepsilon} (local, $d_{\min}=1$) shows all
four local curves decreasing as $\varepsilon$ increases. 
Within the log-statistic release family,
\textsc{DA/LR} is better at low privacy budgets ($\varepsilon \le 0.5$),
but \textsc{NO/LR} becomes better for $\varepsilon \ge 1$. Within the
degree-release family, \textsc{DA/DR} is slightly better at
$\varepsilon=0.1$, while \textsc{NO/DR} has lower error for $\varepsilon \ge 0.3$.
Overall, degree-release variants (\textsc{NO/DR}, \textsc{DA/DR})
remain below log-statistic-release variants (\textsc{NO/LR},
\textsc{DA/LR}), and 
the best local accuracy is
obtained by \textsc{NO/DR}.

\setlength{\textfloatsep}{16pt}
\begin{table}[t]
\centering
\small
\captionsetup{aboveskip=0pt,belowskip=2pt}
\caption{Summary for local model. Mean $l_1$ is averaged over all runs and datasets, max $l_1$ is the worst case, and std.\ range gives the per-dataset standard deviation range over 20 runs. Bold marks the lowest mean and max for each $d_{\min}$.}
\label{tab:localsummary}
\vspace{0.1in}
\begin{adjustbox}{max width=\columnwidth}
\begingroup
\renewcommand{\arraystretch}{0.98}
\setlength{\aboverulesep}{0pt}
\setlength{\belowrulesep}{0pt}
\setlength{\cmidrulesep}{0.5pt}
\begin{tabular}{l c c c}
\toprule
Method & Mean $l_1$ (\%) & Max $l_1$ (\%) & Std.\ range \\[-0.1em]
\midrule
\multicolumn{4}{c}{$d_{\min}=1$} \\
\cmidrule(lr){1-4}
\textsc{DA/LR} & 8.72 & 16.27 & 0.01644--0.10478 \\
\textsc{DA/DR} & 9.14 & 17.76 & 0.00295--1.07710 \\
\textsc{NO/LR} & 6.26 & 11.93 & 0.03661--0.30503 \\
\textsc{NO/DR} & \textbf{1.03} & \textbf{3.31} & 0.00817--0.18210 \\
\midrule
\multicolumn{4}{c}{$d_{\min}=3$} \\
\cmidrule(lr){1-4}
\textsc{DA/LR} & 4.61 & 8.88 & 0.01388--0.12399 \\
\textsc{DA/DR} & 5.94 & 9.62 & 0.00528--0.85213 \\
\textsc{NO/LR} & 3.06 & \textbf{7.21} & 0.02496--0.22257 \\
\textsc{NO/DR} & \textbf{1.63} & 7.91 & 0.00927--0.99561 \\
\bottomrule
\end{tabular}
\endgroup
\end{adjustbox}
\end{table}

Comparing across the two models, central variants are less sensitive to
$\varepsilon$, while local variants benefit more from larger $\varepsilon$. 
In absolute error, central \textsc{NO} is below all local variants across the plotted range.

\mynonparagraph{Brightkite.}
Figure~\ref{fig:centralbrightkite} (central, $d_{\min}=1$) shows a
nearly flat trend as $\varepsilon$ increases from 0.1 to 5, 
with small
variation across the full range. 
In this dataset, \textsc{NO} has far lower error than \textsc{DA} at
every $\varepsilon$.

Figure~\ref{fig:localbrightkite} (local, $d_{\min}=1$) shows that
local error decreases overall as $\varepsilon$ increases.
The direct estimation helps the
degree-release family across the full $\varepsilon$ range, while in
the log-statistic family the two estimators are close and trade places
across $\varepsilon$, overall, 
\textsc{DA/DR} is the lowest error local method at every $\varepsilon$
value shown. 
At smaller $\varepsilon$, both degree-release variants (\textsc{NO/DR}, \textsc{DA/DR}) sit below the log-statistic release variants (\textsc{NO/LR}, \textsc{DA/LR}), this advantage narrows with larger $\varepsilon$ and \textsc{NO/DR} rises above the log-statistic curves at $\varepsilon=5$. \textsc{DA/DR}, however, remains the best local option throughout.

Finally, comparing across the two models, the same pattern as \rfig{fig:varyingsynpower} holds:
central curves are less sensitive to $\varepsilon$, while local curves
improve as $\varepsilon$ increases. 
In
absolute error, central variants remain below all local variants across the full range.

\mynonparagraph{Ego-twitter.}
Figure~\ref{fig:centralegotwitter} (central, $d_{\min}=1$) shows a similar trend as previous datasets. 
Figure~\ref{fig:localegotwitter} (local, $d_{\min}=1$) shows a 
stronger dependence on $\varepsilon$, with all local curves
decreasing as privacy budget increases. Within the degree-release
family, \textsc{NO/DR} has lower error than \textsc{DA/DR} at every
$\varepsilon$. Within the log-statistic release family, \textsc{NO/LR}
has no valid estimate at $\varepsilon=0.1$ because all runs hit the
$\hat{\alpha}<0$ clamp, but for $\varepsilon \ge 0.3$ it is
consistently lower error than \textsc{DA/LR}. Overall, \textsc{NO/DR}
is the lowest-error local method at every $\varepsilon$ value shown,
and by $\varepsilon \ge 1$ both numerical-optimization variants
outperform their discrete-approximation counterparts.

Finally, comparing across two models, the same pattern as previous datasets holds:
central curves are less sensitive to $\varepsilon$, while local curves
improve overall as $\varepsilon$ increases. In absolute error, central
\textsc{NO} remains below all reported local estimates across the entire
range.

\subsection{Summary of Findings}
Across all datasets, privacy budgets, and both privacy models, we
highlight four overall observations.

\mynonparagraph{Observation 1: Direct Sub-Component Privatization 
Outperforms Degree-Distribution Fitting.}
To answer RQ1, our results show that adding noise directly to the $   
\alpha$-estimator sub-components is consistently better than degree-distribution based 
fitting. In the centralized setting, \textsc{NO} achieves lower error and better stability than \textsc{Base} on every dataset,
and \textsc{DA} does so in aggregate mean and on all real-world networks.
The main reason is that our method perturbs only low-dimensional sufficient statistics $(T_{\text{disc}},N)$, while degree-distribution based fitting injects noise into many 
histogram bins and can distort the tail used to estimate $\alpha$.

\mynonparagraph{Observation 2: Numerical Optimization is More Accurate Than Direct Approximation Under DP Noise.}
To answer RQ2, \textsc{NO} has lower $l_1$ error than \textsc{DA} in
both models. In the centralized model, \textsc{NO}'s mean $l_1$  error is more than two orders of magnitude lower than that of \textsc{DA} on every dataset, and nearly three orders of magnitude lower on aggregate (Table~\ref{tab:centralperdataset}).

In the local model, \textsc{NO/DR}
is far lower than \textsc{DA/DR}, and \textsc{NO/LR} is lower than
\textsc{DA/LR} on average (Table~\ref{tab:localsummary}). The difference in accuracy is because \textsc{DA}
uses a closed-form approximation of $\alpha$, while \textsc{NO}
optimizes the exact discrete log-likelihood.

\mynonparagraph{Observation 3: In Local DP, Degree Release Helps Numerical Optimization but Log-Statistic Release Helps Direct Approximation.}
To answer RQ3, the effect of release mode depends on the estimator
family. \textsc{NO/DR} has lower mean $l_1$ error than \textsc{NO/LR}
on most datasets, while \textsc{DA/LR} has lower mean $l_1$ error than
\textsc{DA/DR} on most datasets (Table~\ref{tab:localperdataset}).

\mynonparagraph{Observation 4: Effect of $d_{\min}$ is Method-Dependent.}
To answer RQ4, the effect of increasing $d_{\min}$ from
$1$ to $3$ is method-dependent. In the centralized setting,
\textsc{DA} improves on every dataset, while \textsc{NO} worsens on
most (Table~\ref{tab:centralperdataset}). In the local setting, \textsc{DA/LR}, \textsc{DA/DR}, and
\textsc{NO/LR} improve on most datasets, while \textsc{NO/DR} worsens
on most (Table~\ref{tab:localperdataset}). The preferred $d_{\min}$ therefore depends on which estimator
is used.

\section{Related Work}
\vspace{-0.06in}
\mynonparagraph{Differentially Private Degree Distribution Release.}
A common approach for privately estimating $\alpha$ is to fit a power-law model to the privatized degree distribution histogram. 
Hay et al.~\cite{degreedistribution} propose an efficient edge-DP method for releasing the degree distribution and show the fitting power-law estimate. Similarly, works like~\cite{degreecorrelationgraphgen, ASGLDP, directedgraphsdd, node2edge} develop edge-DP degree distribution techniques. DP degree distribution release has also been studied under node-DP \cite{nodedpdegreedistribution,LiuCMY22,nodedpERestimate,  nodehistogram, node2edge}. 
Compared to DP degree distribution use for $\alpha$ estimation, we approach the problem by adding noise only to the few statistics used for $\alpha$ estimation, hence avoiding inaccuracies from binning/smoothing/projection for noisy degree histograms.

\mynonparagraph{Differentially Private Graph Algorithms.}
Beyond private degree distribution, several DP graph algorithms have been developed. They broadly fall into two styles covering both centralized and local models. The first style involves publishing private version of the graph (or a synthetic graph)~\cite{privgraph, ldpsynth, PrivDPR, LDPgen2Phase, ASGLDP, hiddenmarkovgraphgen}, whereas the second style is releasing DP estimations for specific graph queries~\cite{edgedporiginal,ldpstats,shuntriangles,Dhulipala22,imola2, nodedp, nodedprecursivejoins, node2edge}. 
Our work relates follows the second style where we aim to release only the private estimate of a single parameter $\alpha$.  

\vspace{-0.05in}
\mynonparagraph{Differentially Private Likelihood Optimization.}
Many likelihood-based estimators can be written as an optimization problem, \eg, $\hat{\theta}=\arg\max_{\theta}\,\ell(\theta;D)$. 
There are ways to make such estimators private in the central DP model. 
One approach is to privatize the optimization itself, for example by perturbing the objective or by adding noise to the optimizer before release~\cite{ChaudhuriMS11}.
Another way is the McSherry-Talwar selection mechanism $E_q^\varepsilon$~\cite{mcsherry2007mechanism} which is used to select one output from candidates with quality scores (\eg, log-likelihood). 
$E_q^\varepsilon$ samples probabilistically, favoring higher-quality candidates, and providing differential privacy while selecting near-optimal outputs.
More generally, DP M-Estimators can be computed by noisy iterative methods (\eg, adding noise to gradients or Newton steps)
that approximately solve the same estimation problem~\cite{AvellaMedinaBL23}.
These DP ideas are relevant here because $\alpha$ is defined through a likelihood maximization problem, and our approach is to privately release only the few summary numbers for the estimation. 

\vspace{0.04in}
\section{Conclusion and Future Work}
We proposed methods to estimate the power-law exponent $\alpha$ of a graph’s degree distribution under edge differential privacy. 
Our approach privatizes only the small set of sufficient statistics needed for the estimation, aiming to reduce tail distortion and error. We developed both centralized and local edge-DP algorithms with discrete-approximation and numerical-optimization variants, and our experiments showed our direct sufficient-statistic privatization approach is more accurate and stable than histogram-based fitting.

\vspace{0.04in}
\noindent
Interesting directions for future work remain, as discussed next.

\vspace{-0.05in}
\mynonparagraph{Randomized-Response Tail Counts.}
Randomized response~\cite{Kairouz2016} is a local DP method for estimating counts of binary attributes.
In our setting, it could be used to estimate the tail size \mbox{$N=\sum_v \mathbf{1}\{d_v \ge d_{\min}\}$}
by having each node privatize and send only the 1-bit tail-membership indicator, rather than sending
noisy degrees and estimating $N$ by thresholding the noisy values.

\vspace{-0.05in}
\mynonparagraph{Friendship-Paradox Sampling under LDP.}
Although our current estimators assume one privatized report per node, an open direction is to study settings with partial participation or limited communication. Friendship-paradox sampling ~\cite{Nettasinghe2021} provides an alternative way to obtain degree observations that over-represent high-degree nodes, integrating this with LDP would require designing a protocol that collects the necessary neighbor-sampled degree information privately and understanding the privacy/utility tradeoff.

\section*{Acknowledgements}
This work is supported by the National Cybersecurity Consortium and the Natural Sciences and Engineering Research Council of Canada.

\balance
\bibliographystyle{plain}
\bibliography{bib}

@inproceedings{privgraph,
  title={{PrivGraph: Differentially Private Graph Data Publication by Exploiting Community Information}},
  author={Yuan, Quan and Zhang, Zhikun and Du, Linkang and Chen, Min and Cheng, Peng and Sun, Mingyang},
  booktitle={32nd USENIX Security Symposium (USENIX Security 23)},
  pages={3241--3258},
  year={2023}
}

@INPROCEEDINGS{degreedistribution,
  author={Hay, Michael and Li, Chao and Miklau, Gerome and Jensen, David},
  booktitle={Ninth IEEE International Conference on Data Mining}, 
  title={{Accurate Estimation of the Degree Distribution of Private Networks}}, 
  year={2009},
  volume={},
  number={},
  pages={169-178},
  doi={10.1109/ICDM.2009.11}}

@InProceedings{nodedp,
author="Kasiviswanathan, Shiva Prasad
and Nissim, Kobbi
and Raskhodnikova, Sofya
and Smith, Adam",
title={{Analyzing Graphs with Node Differential Privacy}},
booktitle="Theory of Cryptography",
year="2013",
address="Berlin, Heidelberg",
pages="457--476",
isbn="978-3-642-36594-2"
}

@article{edgedporiginal,
author = {Karwa, Vishesh and Raskhodnikova, Sofya and Smith, Adam and Yaroslavtsev, Grigory},
title = {{Private Analysis of Graph Structure}},
year = {2011},
issue_date = {August 2011},
publisher = {VLDB Endowment},
volume = {4},
number = {11},
issn = {2150-8097},
url = {https://doi.org/10.14778/3402707.3402749},
doi = {10.14778/3402707.3402749},
journal = {Proceedings of the VLDB Endowment},
month = aug,
pages = {1146–1157},
numpages = {12}
}

@article{degreecorrelationgraphgen,
author = {Wang, Yue and Wu, Xintao},
title = {{Preserving Differential Privacy in Degree-Correlation based Graph Generation}},
year = {2013},
issue_date = {August 2013},
publisher = {IIIA-CSIC},
address = {Bellaterra, Catalonia, ESP},
volume = {6},
number = {2},
issn = {1888-5063},
journal = {Transactions on Data Privacy},
month = aug,
pages = {127–145},
numpages = {19}
}

@inproceedings {ldpstats,
author = {Jacob Imola and Takao Murakami and Kamalika Chaudhuri},
title = {{Locally Differentially Private Analysis of Graph Statistics}},
booktitle = {30th USENIX Security Symposium (USENIX Security 21)},
year = {2021},
isbn = {978-1-939133-24-3},
pages = {983--1000},
url = {https://www.usenix.org/conference/usenixsecurity21/presentation/imola},
publisher = {USENIX Association},
month = aug
}

@inproceedings{nodedpdegreedistribution,
author = {Day, Wei-Yen and Li, Ninghui and Lyu, Min},
title = {{Publishing Graph Degree Distribution with Node Differential Privacy}},
year = {2016},
isbn = {9781450335317},
url = {https://doi.org/10.1145/2882903.2926745},
doi = {10.1145/2882903.2926745},
booktitle = {{Proceedings of the International Conference on Management of Data}},
pages = {123–138},
numpages = {16},
location = {San Francisco, California, USA},
series = {SIGMOD '16}
}

@inproceedings{ldpsynth,
author = {Qin, Zhan and Yu, Ting and Yang, Yin and Khalil, Issa and Xiao, Xiaokui and Ren, Kui},
title = {{Generating Synthetic Decentralized Social Graphs with Local Differential Privacy}},
year = {2017},
isbn = {9781450349468},
url = {https://doi.org/10.1145/3133956.3134086},
doi = {10.1145/3133956.3134086},
booktitle = {Proceedings of the ACM SIGSAC Conference on Computer and Communications Security},
pages = {425–438},
numpages = {14},
location = {Dallas, Texas, USA},
series = {CCS '17}
}

@article{dpbook,
author = {Dwork, Cynthia and Roth, Aaron},
title = {{The Algorithmic Foundations of Differential Privacy}},
year = {2014},
issue_date = {Aug 2014},
publisher = {Now Publishers Inc.},
address = {Hanover, MA, USA},
volume = {9},
number = {3–4},
issn = {1551-305X},
url = {https://doi.org/10.1561/0400000042},
doi = {10.1561/0400000042},
journal = {Foundations and Trends in Theoretical Computer Science},
month = aug,
pages = {211–407},
numpages = {197}
}

@misc{snapdatasets,
  author       = {Jure Leskovec and Andrej Krevl},
  title        = {{SNAP Datasets: Stanford Large Network Dataset Collection}},
  howpublished = {\url{http://snap.stanford.edu/data}},
  month        = jun,
  year         = 2014
}

@article{centrallocaledp,
  title={{Triangle Counting with Local Edge Differential Privacy}},
  author={Eden, Talya and Liu, Quanquan C and Raskhodnikova, Sofya and Smith, Adam},
  journal={Random Structures \& Algorithms},
  volume={66},
  number={4},
  pages={e70002},
  year={2025},
  publisher={Wiley Online Library}
}

@inproceedings {imola2,
author = {Jacob Imola and Takao Murakami and Kamalika Chaudhuri},
title = {{Communication-Efficient Triangle Counting under Local Differential Privacy}},
booktitle = {31st USENIX Security Symposium (USENIX Security 22)},
year = {2022},
isbn = {978-1-939133-31-1},
address = {Boston, MA},
pages = {537--554},
url = {https://www.usenix.org/conference/usenixsecurity22/presentation/imola},
publisher = {USENIX Association},
month = aug
}

@inproceedings{LiuCMY22,
  author       = {Shang Liu and
                  Yang Cao and
                  Takao Murakami and
                  Masatoshi Yoshikawa},
  title        = {{A Crypto-Assisted Approach for Publishing Graph Statistics with Node Local Differential Privacy}},
  booktitle    = {{IEEE} International Conference on Big Data (Big Data), Osaka, Japan, December 17--20, 2022},
  pages        = {5765--5774},
  year         = {2022},
  doi          = {10.1109/BIGDATA55660.2022.10020435}
}

@article{Newman2005Power,
  author    = {Newman, M. E. J.},
  title     = {{Power Laws, Pareto Distributions and Zipf's Law}},
  journal   = {Contemporary Physics},
  volume    = {46},
  number    = {5},
  pages     = {323--351},
  year      = {2005},
  doi       = {10.1080/00107510500052444},
  publisher = {Taylor \& Francis}
}

@article{ChaudhuriMS11,
  author  = {Kamalika Chaudhuri and Claire Monteleoni and Anand D. Sarwate},
  title   = {{Differentially Private Empirical Risk Minimization}},
  journal = {Journal of Machine Learning Research},
  year    = {2011},
  volume  = {12},
  number  = {29},
  pages   = {1069--1109},
  url     = {http://jmlr.org/papers/v12/chaudhuri11a.html}
}

@inproceedings{mcsherry2007mechanism,
author = {McSherry, Frank and Talwar, Kunal},
title = {{Mechanism Design via Differential Privacy}},
booktitle = {Annual IEEE Symposium on Foundations of Computer Science (FOCS)},
year = {2007},
month = {October},
publisher = {IEEE},
url = {https://www.microsoft.com/en-us/research/publication/mechanism-design-via-differential-privacy/},
edition = {Annual IEEE Symposium on Foundations of Computer Science (FOCS)},
}

@article{AvellaMedinaBL23,
author = {Marco Avella-Medina and Casey Bradshaw and Po-Ling Loh},
title = {{Differentially Private Inference via Noisy Optimization}},
volume = {51},
journal = {The Annals of Statistics},
number = {5},
publisher = {Institute of Mathematical Statistics},
pages = {2067 -- 2092},
year = {2023},
doi = {10.1214/23-AOS2321},
URL = {https://doi.org/10.1214/23-AOS2321}
}

@article{mitzenmacher2004brief,
  title={{A Brief History of Generative Models for Power Law and Lognormal Distributions}},
  author={Mitzenmacher, Michael},
  journal={Internet mathematics},
  volume={1},
  number={2},
  pages={226--251},
  year={2004},
  publisher={Taylor \& Francis}
}

@INPROCEEDINGS{Dhulipala22,
  author={Dhulipala, Laxman and Liu, Quanquan C. and Raskhodnikova, Sofya and Shi, Jessica and Shun, Julian and Yu, Shangdi},
  booktitle={IEEE 63rd Annual Symposium on Foundations of Computer Science (FOCS)}, 
  title={{Differential Privacy from Locally Adjustable Graph Algorithms: k-Core Decomposition, Low Out-Degree Ordering, and Densest Subgraphs}}, 
  year={2022},
  volume={},
  number={},
  pages={754-765},
  doi={10.1109/FOCS54457.2022.00077}}

@article{shuntriangles,
  title={{Practical and Accurate Local Edge Differentially Private Graph Algorithms}},
  author={Mundra, Pranay and Papamanthou, Charalampos and Shun, Julian and Liu, Quanquan C},
  journal={Proceedings of the VLDB Endowment},
  volume={18},
  number={11},
  pages={4199--4213},
  year={2025},
  publisher={VLDB Endowment}
}

@article{li2017experimental,
  title={{An Experimental Study on Hub Labeling Based Shortest Path Algorithms}},
  author={Li, Ye and U, Leong Hou and Yiu, Man Lung and Kou, Ngai Meng},
  journal={Proceedings of the VLDB Endowment},
  volume={11},
  number={4},
  pages={445--457},
  year={2017},
  publisher={VLDB Endowment}
}

@inproceedings{fastexacthub,
author = {Akiba, Takuya and Iwata, Yoichi and Yoshida, Yuichi},
title = {{Fast Exact Shortest-path Distance Queries on Large Networks by Pruned Landmark Labeling}},
year = {2013},
isbn = {9781450320375},
url = {https://doi.org/10.1145/2463676.2465315},
doi = {10.1145/2463676.2465315},
booktitle = {{Proceedings of the ACM SIGMOD International Conference on Management of Data}},
pages = {349–360},
numpages = {12},
location = {New York, New York, USA},
series = {SIGMOD '13}
}

@article{jiang2014hop,
  title={{Hop Doubling Label Indexing for Point-to-Point Distance Querying on Scale-Free Networks}},
  author={Jiang, Minhao and Fu, Ada Wai-Chee and Wong, Raymond Chi-Wing and Xu, Yanyan},
  journal={Proceedings of the VLDB Endowment},
  volume={7},
  number={12},
  year={2014}
}

@article{chung2002connected,
  title={{Connected Components in Random Graphs with Given Expected Degree Sequences}},
  author={Chung, Fan and Lu, Linyuan},
  journal={Annals of combinatorics},
  volume={6},
  number={2},
  pages={125--145},
  year={2002},
  publisher={Springer}
}

@inproceedings{tang2015optimizing,
  title={{Optimizing and Auto-tuning Scale-free Sparse Matrix-vector Multiplication on Intel Xeon Phi}},
  author={Tang, Wai Teng and Zhao, Ruizhe and Lu, Mian and Liang, Yun and Huyng, Huynh Phung and Li, Xibai and Goh, Rick Siow Mong},
  booktitle={IEEE/ACM International Symposium on Code Generation and Optimization (CGO)},
  pages={136--145},
  year={2015},
  organization={IEEE}
}

@inproceedings{vora2019lumos,
  title={{Lumos: Dependency-Driven Disk-based Graph Processing}},
  author={Vora, Keval},
  booktitle={USENIX Annual Technical Conference (USENIX ATC 19)},
  pages={429--442},
  year={2019}
}

@inproceedings{powerlawsyn,
  title={{Engineering Uniform Sampling of Graphs with a Prescribed Power-law Degree Sequence}},
  author={Allendorf, Daniel and Meyer, Ulrich and Penschuck, Manuel and Tran, Hung and Wormald, Nick},
  booktitle={2022 Proceedings of the Symposium on Algorithm Engineering and Experiments (ALENEX)},
  pages={27--40},
  year={2022},
  organization={SIAM}
}

@article{sectric,
author = {Xu, Minze and Xie, Zhentai and Wang, Zhibin and Wang, Guangzhan and Lai, Longbin and Zhang, Yuan and Tian, Chen and Zhong, Sheng},
title = {{Sectric: Towards Accurate, Privacy-Preserving and Efficient Triangle Counting}},
year = {2025},
issue_date = {June 2025},
publisher = {VLDB Endowment},
volume = {18},
number = {10},
issn = {2150-8097},
url = {https://doi.org/10.14778/3748191.3748202},
doi = {10.14778/3748191.3748202},
journal = {Proceedings of the VLDB Endowment},
month = jun,
pages = {3382–3395},
numpages = {14}
}

@article{privagm,
author = {Wang, Songlei and Zheng, Yifeng and Jia, Xiaohua and Hu, Haibo},
title = {{PrivAGM: Secure Construction of Differentially Private Directed Attributed Graph Models on Decentralized Social Graphs}},
year = {2025},
issue_date = {July 2025},
publisher = {VLDB Endowment},
volume = {18},
number = {11},
issn = {2150-8097},
url = {https://doi.org/10.14778/3749646.3749722},
doi = {10.14778/3749646.3749722},
journal = {Proceedings of the VLDB Endowment},
month = jul,
pages = {4682–4694},
numpages = {13}
}

@ARTICLE{ASGLDP,
  author={Wei, Chengkun and Ji, Shouling and Liu, Changchang and Chen, Wenzhi and Wang, Ting},
  journal={IEEE Transactions on Information Forensics and Security}, 
  title={{AsgLDP: Collecting and Generating Decentralized Attributed Graphs With Local Differential Privacy}}, 
  year={2020},
  volume={15},
  number={},
  pages={3239-3254},
  doi={10.1109/TIFS.2020.2985524}}

@inproceedings{nodedpERestimate,
 author = {Ullman, Jonathan and Sealfon, Adam},
 booktitle = {{Advances in Neural Information Processing Systems}},
 editor = {H. Wallach and H. Larochelle and A. Beygelzimer and F. d\textquotesingle Alch\'{e}-Buc and E. Fox and R. Garnett},
 pages = {},
 publisher = {Curran Associates, Inc.},
 title = {{Efficiently Estimating Erdos-Renyi Graphs with Node Differential Privacy}},
 url = {https://proceedings.neurips.cc/paper_files/paper/2019/file/955cb567b6e38f4c6b3f28cc857fc38c-Paper.pdf},
 volume = {32},
 year = {2019}
}

@incollection{directedgraphsdd,
author = "Reuben, Jenni",
title = {{Towards a Differential Privacy Theory for Edge-Labeled Directed Graphs}},
year = 2018,
doi = "10.18420/sicherheit2018_24",
booktitle = "SICHERHEIT 2018",
publisher = "Gesellschaft für Informatik e.V.",
address = "Bonn",
pissn = "1617-5468",
isbn = "978-3-88579-675-6",
pages = "273--278",
}

@inproceedings{nodedprecursivejoins,
author = {Chen, Shixi and Zhou, Shuigeng},
title = {{Recursive Mechanism: Towards Node Differential Privacy and Unrestricted Joins}},
year = {2013},
isbn = {9781450320375},
url = {https://doi.org/10.1145/2463676.2465304},
doi = {10.1145/2463676.2465304},
booktitle = {{Proceedings of the ACM SIGMOD International Conference on Management of Data}},
pages = {653–664},
numpages = {12},
location = {New York, New York, USA},
series = {SIGMOD '13}
}

@ARTICLE{nodehistogram,
  author={Liu, Ganghong and Ma, Xuebin and Li, Wuyungerile},
  journal={IEEE Access}, 
  title={{Publishing Node Strength Distribution With Node Differential Privacy}}, 
  year={2020},
  volume={8},
  number={},
  pages={217642-217650},
  doi={10.1109/ACCESS.2020.3040077}}

@article{node2edge,
author = {Hu, Yihua and Ding, Hao and Dong, Wei},
title = {{N2E: A General Framework to Reduce Node-Differential Privacy to Edge-Differential Privacy for Graph Analytics}},
year = {2025},
issue_date = {December 2025},
volume = {3},
number = {6},
url = {https://doi.org/10.1145/3769808},
doi = {10.1145/3769808},
journal = {Proceedings of the ACM on Management of Data},
month = dec,
articleno = {343},
numpages = {26}
}

@article{clauset2009powerlaws,
   title={{Power-Law Distributions in Empirical Data}},
   volume={51},
   ISSN={1095-7200},
   url={http://dx.doi.org/10.1137/070710111},
   DOI={10.1137/070710111},
   number={4},
   journal={SIAM Review},
   publisher={Society for Industrial & Applied Mathematics (SIAM)},
   author={Clauset, Aaron and Shalizi, Cosma Rohilla and Newman, Mark EJ},
   year={2009},
   month=nov, pages={661–703} }

@article{bauke2007powerlaws,
   title={{Parameter Estimation for Power-law Distributions by Maximum Likelihood Methods}},
   volume={58},
   ISSN={1434-6036},
   url={http://dx.doi.org/10.1140/epjb/e2007-00219-y},
   DOI={10.1140/epjb/e2007-00219-y},
   number={2},
   journal={The European Physical Journal B},
   publisher={Springer Science and Business Media LLC},
   author={Heiko Bauke},
   year={2007},
   month=jul, pages={167–173} }

@article{Nettasinghe2021,
author = {Nettasinghe, Buddhika and Krishnamurthy, Vikram},
title = {{Maximum Likelihood Estimation of Power-law Degree Distributions via Friendship Paradox-based Sampling}},
year = {2021},
issue_date = {June 2021},
volume = {15},
number = {6},
issn = {1556-4681},
url = {https://doi.org/10.1145/3451166},
doi = {10.1145/3451166},
journal = {ACM Transactions on Knowledge Discovery from Data (TKDD)},
month = may,
articleno = {106},
numpages = {28}
}

@article{Kairouz2016,
  title={{Extremal Mechanisms for Local Differential Privacy}},
  author={Kairouz, Peter and Oh, Sewoong and Viswanath, Pramod},
  journal={Advances in Neural Information Processing Systems},
  volume={27},
  year={2014}
}

@inproceedings{PrivDPR,
author = {Zhang, Sen and Hu, Haibo and Ye, Qingqing and Xu, Jianliang},
title = {{PrivDPR: Synthetic Graph Publishing with Deep PageRank under Differential Privacy}},
year = {2025},
isbn = {9798400712456},
url = {https://doi.org/10.1145/3690624.3709334},
doi = {10.1145/3690624.3709334},
booktitle = {Proceedings of the 31st ACM SIGKDD Conference on Knowledge Discovery and Data Mining V.1},
pages = {1936–1947},
numpages = {12},
location = {Toronto ON, Canada},
series = {KDD '25}
}

@inproceedings{LDPgen2Phase,
author = {Zhang, Yuxuan and Wei, Jianghong and Zhang, Xiaojian and Hu, Xuexian and Liu, Wenfen},
title = {{A Two-Phase Algorithm for Generating Synthetic Graph Under Local Differential Privacy}},
year = {2018},
isbn = {9781450365673},
url = {https://doi.org/10.1145/3290480.3290503},
doi = {10.1145/3290480.3290503},
booktitle = {Proceedings of the 8th International Conference on Communication and Network Security},
pages = {84–89},
numpages = {6},
location = {Qingdao, China},
series = {ICCNS '18}
}

@ARTICLE{hiddenmarkovgraphgen,
  journal={IEEE Transactions on Computers}, 
  title={{Dynamic Graph Publication With Differential Privacy Guarantees for Decentralized Applications}}, 
  author={Li, Zhetao and Xiao, Yong and Liu, Haolin and Liao, Xiaofei and Yuan, Ye and Du, Junzhao},
  year={2025},
  volume={74},
  number={5},
  pages={1771-1785},
  doi={10.1109/TC.2025.3543605}
}

@software{ledpcode,
  author       = {Pranay Mundra and
                  Quanquan C. Liu},
  title        = {{mundrapranay/DistributedLEDPGraphAlgos: VLDB
                   Artifact Release, Zenodo
                  }},
  month        = jun,
  year         = 2025,
  publisher    = {Zenodo},
  version      = {v1.0},
  doi          = {10.5281/zenodo.15741880},
  url          = {https://doi.org/10.5281/zenodo.15741880},
  swhid        = {swh:1:dir:582e42ebbd113ac598fd3ce00940ffbb1a9350bf
                   ;origin=https://doi.org/10.5281/zenodo.15741879;vi
                   sit=swh:1:snp:b2cd2ae2859debd273eee9836f14770f9e13
                   b24b;anchor=swh:1:rel:830b61123a35ce7096ed97467ae6
                   184c5c55b0e4;path=mundrapranay-
                   DistributedLEDPGraphAlgos-fc4d2fe
                  },
}

\newpage
\appendix
\section{Detailed Results}
\label{sec:detailedresults}

This appendix reports the detailed per-dataset and per-$\varepsilon$
results that support the main experimental comparisons in the paper.

\begin{table}[H]
\centering
\small
\caption{Mean $l_1$ (\%) in centralized model, averaged over 20 runs. Bold marks the lowest mean within each each dataset and $d_{\min}$ block.}
\label{tab:centralperdataset}
\vspace{-0.1in}
\begin{adjustbox}{max width=\columnwidth}
\begingroup
\renewcommand{\arraystretch}{0.98}
\setlength{\aboverulesep}{0pt}
\setlength{\belowrulesep}{0pt}
\setlength{\cmidrulesep}{0.5pt}
\begin{tabular}{l ccc ccc}
\toprule
\multirow{2}{*}{Dataset} & \multicolumn{3}{c}{$d_{\min}=1$} & \multicolumn{3}{c}{$d_{\min}=3$} \\[-0.1em]
\cmidrule(lr){2-4}\cmidrule(lr){5-7}
& \textsc{Base} & \textsc{DA} & \textsc{NO} & \textsc{Base} & \textsc{DA} & \textsc{NO} \\
\midrule
wiki & 23.90 & 13.44 & \textbf{0.0248} & 36.25 & 6.97 & \textbf{0.0315} \\
enron & 25.14 & 3.43 & \textbf{0.0057} & 29.45 & 2.29 & \textbf{0.0091} \\
brightkite & 15.86 & 2.59 & \textbf{0.0038} & 20.00 & 2.09 & \textbf{0.0073} \\
ego-twitter & 15.15 & 9.82 & \textbf{0.0021} & 23.66 & 6.61 & \textbf{0.0029} \\
gplus & 18.53 & 9.15 & \textbf{0.0013} & 20.98 & 7.26 & \textbf{0.0013} \\
stanford & 24.14 & 0.58 & \textbf{0.0006} & 34.60 & 0.25 & \textbf{0.0009} \\
syn-power-0 & 7.55 & 15.02 & \textbf{0.0022} & 10.93 & 6.36 & \textbf{0.0021} \\
syn-power-1 & 6.53 & 14.43 & \textbf{0.0022} & 10.10 & 6.05 & \textbf{0.0017} \\
syn-power-2 & 6.10 & 17.69 & \textbf{0.0017} & 10.08 & 9.46 & \textbf{0.0026} \\
\bottomrule
\end{tabular}
\endgroup
\end{adjustbox}
\end{table}

\begin{table}[H]
\centering
\small
\caption{Mean $l_1$ (\%) in local model, averaged over 20 runs. Bold marks the lowest mean within each dataset and $d_{\min}$ block.}
\label{tab:localperdataset}
\vspace{-0.1in}
\begin{adjustbox}{max width=\columnwidth}
\begingroup
\renewcommand{\arraystretch}{0.98}
\setlength{\aboverulesep}{0pt}
\setlength{\belowrulesep}{0pt}
\setlength{\cmidrulesep}{0.5pt}
\begin{tabular}{l c c c c c}
\toprule
Dataset & \textsc{DA/LR} & \textsc{DA/DR} & \textsc{NO/LR} & \textsc{NO/DR} \\[-0.1em]
\midrule
\multicolumn{5}{c}{$d_{\min}=1$} \\
\cmidrule(lr){1-5}
wiki & 11.09 & 13.18 & 6.25 & \textbf{0.64} \\
enron & 2.14 & \textbf{1.94} & 10.49 & 2.63 \\
brightkite & 3.78 & \textbf{1.05} & 11.80 & 3.20 \\
ego-twitter & 8.26 & 9.54 & 3.62 & \textbf{0.61} \\
gplus & 8.26 & 8.94 & 2.10 & \textbf{0.46} \\
stanford & 3.63 & \textbf{0.07} & 6.44 & 0.74 \\
syn-power-0 & 12.86 & 15.18 & 5.76 & \textbf{0.40} \\
syn-power-1 & 12.29 & 14.59 & 5.61 & \textbf{0.39} \\
syn-power-2 & 16.20 & 17.75 & 4.30 & \textbf{0.17} \\
\midrule
\multicolumn{5}{c}{$d_{\min}=3$} \\
\cmidrule(lr){1-5}
wiki & \textbf{2.86} & 4.42 & 6.86 & 4.13 \\
enron & \textbf{0.26} & 5.00 & 3.23 & 3.30 \\
brightkite & \textbf{0.86} & 5.26 & 3.71 & 3.81 \\
ego-twitter & 5.90 & 6.80 & 1.14 & \textbf{0.31} \\
gplus & 7.00 & 7.45 & 0.47 & \textbf{0.33} \\
stanford & 6.30 & 1.85 & 6.50 & \textbf{1.70} \\
syn-power-0 & 4.83 & 6.69 & 2.30 & \textbf{0.48} \\
syn-power-1 & 4.51 & 6.36 & 2.28 & \textbf{0.46} \\
syn-power-2 & 8.81 & 9.61 & 1.11 & \textbf{0.24} \\
\bottomrule
\end{tabular}
\endgroup
\end{adjustbox}
\end{table}

\begin{table}[H]
\centering
\small
\caption{Mean $l_1$ error for $\varepsilon$ with $d_{\min}=1$ for Syn-power-1, Brightkite, and Ego-twitter. Bold marks the best method within the Central and Local blocks for each $\varepsilon$ within each dataset. The Ego-twitter NO/LR entry at $\varepsilon=0.1$ is marked -- because all runs optimized to $\hat{\alpha}<0$; with the minimum $\hat{\alpha}$ clamped to $0$, no valid estimates remained.}
\label{tab:epssummary}
\vspace{-0.13in}
\begin{adjustbox}{max width=\columnwidth}
\begingroup
\renewcommand{\arraystretch}{0.98}
\setlength{\aboverulesep}{0pt}
\setlength{\belowrulesep}{0pt}
\setlength{\cmidrulesep}{0.5pt}
\begin{tabular}{c c c c c c c}
\toprule
\multirow{2}{*}{$\varepsilon$} & \multicolumn{2}{c}{Central} & \multicolumn{4}{c}{Local} \\[-0.1em]
\cmidrule(lr){2-3}\cmidrule(lr){4-7}
& \textsc{DA} & \textsc{NO} & \textsc{DA/LR} & \textsc{DA/DR} & \textsc{NO/LR} & \textsc{NO/DR} \\
\midrule
\multicolumn{7}{c}{Syn-power-1} \\[-0.1em]
\midrule
0.1 & 0.14433 & \textbf{0.00017} & 0.08033 & \textbf{0.05712} & 0.08701 & 0.05784 \\
0.3 & 0.14433 & \textbf{0.00007} & 0.06993 & 0.04822 & 0.07936 & \textbf{0.04605} \\
0.5 & 0.14433 & \textbf{0.00004} & 0.06265 & 0.04373 & 0.06593 & \textbf{0.04025} \\
1.0 & 0.14432 & \textbf{0.00002} & 0.05229 & 0.03818 & 0.05135 & \textbf{0.03301} \\
2.0 & 0.14432 & \textbf{0.00001} & 0.04455 & 0.03407 & 0.04130 & \textbf{0.02752} \\
5.0 & 0.14432 & \textbf{0.00000} & 0.04023 & 0.03127 & 0.03570 & \textbf{0.02368} \\
\midrule
\multicolumn{7}{c}{Brightkite} \\[-0.1em]
\cmidrule(lr){1-7}
0.1 & 0.02601 & \textbf{0.00040} & 0.48725 & \textbf{0.25672} & 0.48313 & 0.43505 \\
0.3 & 0.02592 & \textbf{0.00011} & 0.38401 & \textbf{0.22502} & 0.39038 & 0.32023 \\
0.5 & 0.02590 & \textbf{0.00006} & 0.32831 & \textbf{0.19449} & 0.32760 & 0.27859 \\
1.0 & 0.02588 & \textbf{0.00004} & 0.25118 & \textbf{0.17476} & 0.24886 & 0.22218 \\
2.0 & 0.02588 & \textbf{0.00002} & 0.18919 & \textbf{0.16231} & 0.19887 & 0.18508 \\
5.0 & 0.02588 & \textbf{0.00001} & 0.16078 & \textbf{0.15683} & 0.15816 & 0.18328 \\
\midrule
\multicolumn{7}{c}{Ego-twitter} \\[-0.1em]
\cmidrule(lr){1-7}
0.1 & 0.11020 & \textbf{0.00025} & 1.17006 & 0.25609 & -- & \textbf{0.04877} \\
0.3 & 0.11018 & \textbf{0.00010} & 0.70188 & 0.20976 & 0.29952 & \textbf{0.02217} \\
0.5 & 0.11018 & \textbf{0.00006} & 0.43090 & 0.19537 & 0.13465 & \textbf{0.01429} \\
1.0 & 0.11017 & \textbf{0.00003} & 0.19093 & 0.18131 & 0.04065 & \textbf{0.00675} \\
2.0 & 0.11016 & \textbf{0.00001} & 0.09852 & 0.17279 & 0.01045 & \textbf{0.00225} \\
5.0 & 0.11016 & \textbf{0.00000} & 0.06864 & 0.16830 & 0.00118 & \textbf{0.00010} \\
\bottomrule
\end{tabular}
\endgroup
\end{adjustbox}
\end{table}

\end{document}